\date{December 2016}
\title{Efficient quantum walk on the grid\\ with multiple marked elements}
\author{Peter H{\o}yer}
\author{Mojtaba Komeili}
\affil{\,Department of Computer Science, University of Calgary, Canada\\
  \textup{\texttt{hoyer@ucalgary.ca, mojtaba.komeili@ucalgary.ca}}}
\newtheorem{theorem}{Theorem}
\newtheorem{lemma}[theorem]{Lemma}
\newtheorem{corollary}[theorem]{Corollary}
\newtheorem{definition}[theorem]{Definition}
\newenvironment{proof}{\begin{trivlist}\item[]{\flushleft\bf Proof }}
   {\qed\end{trivlist}}
\theoremstyle{plain}
\newtheorem{fact}[theorem]{Fact}
\def\squareforqed{\hbox{\rlap{$\sqcap$}$\sqcup$}}
\def\qed{\ifmmode\squareforqed\else{\unskip\nobreak\hfil
\penalty50\hskip1em\null\nobreak\hfil\squareforqed
\parfillskip=0pt\finalhyphendemerits=0\endgraf}\fi}
\newcommand{\smallspace}{{\mskip 2mu minus 1mu}}
\newcommand{\bigceils}[1]{\raisebox{.1em}{\big\lceil}#1\smallspace\raisebox{.1em}{\big\rceil}}
\newcommand{\eps}{\epsilon}
\newcommand{\name}[1]{\textup{\textsf{#1}}}
\newcommand{\cht}{\name{HT}}
\newcommand{\eht}{\name{HT}^{+}}
\newcommand{\eff}{\textup{eff}}
\newcommand{\effH}{H_\eff}
\newcommand{\effht}{\name{HT}_\eff}
\newcommand{\estH}{\tilde{h}}
\newcommand{\sdis}{\pi}
\newcommand{\ket}[1]{\ensuremath{\vert{#1}\rangle}}
\newcommand{\inner}[2]{\ensuremath{\langle{#1}\vert{#2}\rangle}}
\newcommand{\op}[1]{\ensuremath{\mathsf{#1}}}
\newcommand{\opD}{\op{D}}
\newcommand{\opU}{\op{U}}
\newcommand{\opW}{\op{W}}
\newcommand{\opP}{\op{P}} 
\newcommand{\opS}{\op{S}}
\newcommand{\opC}{\op{C}}
\newcommand{\qq}{\op{E}}
\newcommand{\refl}{\op{Ref}}
\newcommand{\Adj}{\textup{Adj}} 
\newcommand{\tinit}{\textup{init}}
\newcommand{\kinit}{\ket{\tinit}}
\newcommand{\grid}{\textup{grid}}
\newcommand{\torus}{\textup{torus}}
\newcommand{\Pgrid}{\opP_{\grid}}
\newcommand{\Ptorus}{\opP_{\torus}}
\newcommand{\unmarked}{U}
\newcommand{\marked}{{\mathcal M}}
\begin{document}
% ======================================================================

\maketitle

\begin{abstract}
  We give a quantum algorithm for finding a marked element on the grid when
  there are multiple marked elements.  Our algorithm uses quadratically fewer
  steps than a random walk on the grid, ignoring logarithmic factors.  This is
  the first known quantum walk that finds a marked element in a number of
  steps less than the square-root of the extended hitting time.  We also give
  a new tighter upper bound on the extended hitting time of a marked subset,
  expressed in terms of the hitting times of its members.
\end{abstract}

% ======================================================================

 \makeatletter
 \def\blfootnote{\gdef\@thefnmark{}\@footnotetext}
 \makeatother
 \blfootnote{To appear in STACS 2017, the 34th International Symposium on
   Theoretical Aspects of Computer Science.}

% ======================================================================
% Introduction
% ======================================================================

\vspace*{-4mm}
\enlargethispage{5mm}

\section{Introduction}
\label{section:introduction}

Searching structured and unstructured data is one of the most fundamental
tasks in computer science.  In many search problems in quantum computing, we
are given a set of $N$ elements of which $M$ elements are marked, and our task
is to find and output a marked element.

Search problems have been studied intensively and found many applications,
both classically and quantumly.  The first result on search within quantum
computing was given by Bennett et al.~\cite{BBBV97}, who showed in 1994 that
any quantum algorithm requires $\Omega(\sqrt{N/M})$ steps to find a marked
element.  Grover~\cite{Gro96} showed next that a quantum computer can find
such a marked element in $O(\sqrt{N})$ steps, compared to $\Omega(N)$ for a
classical computer.  This quadratic speed-up was then generalized to arbitrary
unstructured search problems by a generic amplitude amplification
process by Brassard et al.~\cite{BHMT02}.

Grover's algorithm and amplitude amplification are directly applicable to
unstructured global search problems, but not to search problems relying on a
local realization.  Consider we have just inspected one of the $N$ elements
and found that it is not marked, and we want next to inspect another of the
$N$ elements.  Many search problems have the localized property that it is
less costly to inspect an element that is close to the most recently inspected
element, as opposed to inspecting an arbitrary element.  Many probabilistic
algorithms for such problems use random walks, and the quantum analogue of
such are called quantum walks.

\enlargethispage{6mm}
Quantum walks have proven very successful in quantum computing, with
applications in diverse settings such as communication complexity~\cite{AA05},
element distinctness problems~\cite{Amb07,BCJKM13}, testing group
commutativity~\cite{MN07}, and triangle finding~\cite{MSS07,Gall14}.
Excellent surveys on quantum walks, their history and applications,
include~\cite{Amb03,Kem03,San08,Ven12,NRS14}.

The expected number of steps~$H$ required to find a marked element by a random
walk is called the \emph{hitting time}.  The hitting time depends on the
structure being searched as well as the number and locations of the marked
elements.

Quantum walks have been studied for many structures, and in particular for the
torus.  A~torus is a graph containing $N$ vertices laid out in a
two-dimensional square structure.  It is also called a grid or a
two-dimensional lattice.  The first work on the torus was by Aaronson and
Ambainis~\cite{AA05}, who showed that a torus can be searched in $O(\sqrt{N}
\log^2 N)$ steps.  Their breakthrough result is remarkably close to the
quadratic speed-up that is attainable for unstructured search problems, and it
raised the question of determining the limitations of quantum walks in general
and on a torus in particular.

For the torus, Ambainis et al.~\cite{AKR05} next gave a quantum walk using
$O(\sqrt{N} \log N)$ steps.  The question of whether one could find a marked
element any faster was solved Tulsi~\cite{Tul08} who found a quantum algorithm
using $O(\sqrt{N\log N})$ steps, obtained by attaching an ancilla qubit and
thereby modifying the search space.  The above results assume the torus
contains a single marked element.  If there are multiple marked elements, one
can probabilistically reduce the number of marked elements, potentially
incurring an increased cost, and not what we would naturally expect and
desire, a decreased cost.

For general walks, Szegedy~\cite{Sze04} showed in~an influential paper how to
construct a quantum walk from any given symmetric random walk.  Szegedy's
algorithm detects the presence of a marked element in a number of steps of
order $\sqrt{H}$ which is quadratically smaller than classical hitting
time~$H$.  Szegedy's algorithm applies to any number $M$ of marked elements,
but does not necessarily find a marked element.  In some cases, it outputs a
marked element with success probability no better than if we simply sampled
from the stationary distribution.

Magniez et al.~\cite{MNRS11} next showed how phase estimation can be applied
to the larger class of reversible random walk, and gave an algorithm that both
detects and finds a marked element.  Their algorithm applies to any number of
marked elements, but does not guarantee a quadratic speed-up in the hitting
time.  Magniez et al.~\cite{MNRS12} gave a quantum algorithm that detects the
presence of a marked element for any reversible random walk in $O(\sqrt{H})$
steps.  As Szegedy's algorithm, it applies to any number of marked elements,
but it does not necessarily find a marked element.  Magniez et
al.~\cite{MNRS12} also gave a quantum algorithm that finds a unique marked
element in $O(\sqrt{H})$ steps for any state-transitive random walk.

Krovi et al.~\cite{KOR10} next introduced the novel idea of interpolating
walks.  Krovi et al.{} ~\cite{KMOR16} show that interpolated walks can find a
marked element for any reversible random walk, even with multiple marked
elements.  The algorithm does not guarantee a quadratic speed-up when there
are multiple marked elements.  Dohotaru and H{\o}yer~\cite{DH16} introduced
controlled quantum walks and showed that such walks also find a marked element
for any reversible random walk, even with multiple marked elements, but again,
not quadratically faster when there are multiple marked elements.  The quantum
algorithms given in both papers~\cite{KMOR16} and~\cite{DH16} use a number of
steps in the order of a quantity called the \emph{extended hitting time}.

The question of finding a marked element in quadratically fewer steps than by
a random walk when there are multiple marked elements, has thus remained the
main open question.

The torus has continued to be a canonical graph of study.  Ambainis and
Kokainis~\cite{AK15} show that for the torus, the extended hitting time can be
$\Theta(N)$ while the hitting time is $O(1)$ when there are multiple marked
elements.  On the torus, we can find a unique marked element in $O(\sqrt{N
  \log N})$ steps with success probability of order $1/\log N$ by a
continuous-time quantum walk~\cite{CG14} and by a coin-less quantum
walk~\cite{APN15}.  Ambainis et al.~\cite{ABNOR12} show that the algorithm for
the torus in~\cite{AKR05} can be modified, yielding a quantum algorithm that
uses $O(\sqrt{N \log N})$ steps and finds a unique marked element with
constant probability.  Nahimovs and Santos~\cite{NS16} show that the
probability the algorithm of~\cite{AKR05} finds a marked element can be as
small as $O(1/N)$ when there are two marked elements.  Nahimovs and
Rivosh~\cite{NR16} show that the locations of multiple marked elements on the
torus can significantly impact the hitting time.

In this work, we give a quantum algorithm that finds a marked element
quadratically faster than classically, up to a logarithmic factor, on the
torus, no matter the number of marked elements.  This is the first known
quantum algorithm that finds a marked element faster than the square-root of
the extended hitting time.  For some instances, the extended hitting time is a
factor of~$N$ larger than the hitting time.

We also analyze the extended hitting time.  We give a new upper bound on the
extended hitting time and prove that it is convex in the marked subset, with
respect to the stationary distribution.  These results are stated as
Theorem~\ref{theorem:eht_subadditivity} and Corollary~\ref{corollary:ehtbound}
in Section~\ref{section:eht}.  These two results yield in themselves a
simplification of known quantum walks that are based on pruning the number of
marked elements.

We next define and discuss the torus graph in Section~\ref{section:grid}.
A~major obstacle in finding a better quantum algorithm for the torus has been
its locality properties.  In Section~\ref{section:locality}, we investigate
the locality properties of a random walk on the torus, and we turn these into
our advantage, instead of being a disadvantage.  We are sculpturing the
connectivity.  As argued by Meyer and Wong in~\cite{MW15}, connectivity in
itself is a poor indicator of fast quantum search.  The idea of using
properties of the underlying graph to direct the quantum walk to specific
parts of the search space has been used elsewhere, e.g.{} by Le~Gall
in~\cite{Gall14} to obtain the best known quantum algorithm for triangle
finding.

In Section~\ref{section:algorithm}, we give our new quantum algorithm for
finding a marked element on the torus when there are multiple marked elements.
Our algorithm uses quadratically fewer steps than a random walk, ignoring
logarithmic factors.

% ======================================================================
% Extended hitting time
% ======================================================================

\section{Bounds on the extended hitting time}
\label{section:eht}

Consider a Markov chain on a discrete finite state space $X$ of size~$N$.  We
represent its transition function as an $N \times N$ matrix~$\opP$.  The
entries of~$\opP$ are real and non-negative.  Entry $P_{yx}$ denotes the
probability of transitioning from state~$x$ to state~$y$ in one step.  The
entries in each column sum to one, implying that $\opP$ is column-stochastic.
We can consider the matrix $\opP$ as the adjacency matrix of an underlying
directed weighted graph.

We assume that the chain $\opP$ is ergodic, which implies that it has a unique
stationary distribution~$\sdis$ satisfying that $\opP \sdis = \sdis$.  It
follows from the Perron--Frobenius theorem that the stationary
distribution~$\sdis$ has real and positive entries.  A~Markov chain is
\emph{ergodic} if its underlying graph is strongly connected and acyclic.

We also assume that $\opP$ is reversible.  A Markov chain is \emph{reversible}
if $\opP_{yx}\sdis_x = \opP_{xy}\sdis_y $ for all states $x, y \in X$ in the
state space.  This condition expresses that the same amount of probability
transition in either direction between any two states~$x$ and~$y$ in the
stationary distribution.  From now on, we will only consider Markov chains
that are both ergodic and reversible, and we will also refer to such chains as
random walks.  Reversibility permits us to apply spectral analysis, following
the seminal work of Szegedy~\cite{Sze04}.

Let $\marked \subset X$ be the subset of marked states, and let $\unmarked = X
\setminus \marked$ be the remaining states which are unmarked.  We form the
\emph{absorbing} walk $\opP'$ from $\opP$ by modifying all outgoing edges from
marked states into self-loops.  That is, if $x \in \marked$ is marked, we set
$\opP'_{xx} = 1$ and $\opP'_{yx} = 0$ for all other states $y \in X \setminus
\{x\}$.  We set $P'_{yx} = P_{yx}$ for all unmarked states $x \in \unmarked$
and all states $y \in X$.  We will interchangeably refer to states as
``elements.''

The main goal of the random walk~$\opP$ is to find a marked state.  The walk
starts in a state drawn from the stationary distribution~$\sdis$.  We keep
applying the transition function until we reach a marked state, at which point
the walk halts.  The \emph{hitting time} is the expected number of steps it
takes for the random walk to find a marked state, and it is denoted by
$\cht(\opP, \marked)$.

We use spectral analysis to study the hitting time of random walks, as in
Szegedy~\cite{Sze04}.  The \emph{discriminant} of any given random walk
$\opP$ is the matrix $\opD(P) = \sqrt{\opP \circ \opP^T\smallspace}$, where
$T$ denotes matrix transposition, and where the Hadamard product $\circ$
denotes entry-wise product and the square-root is taken entry-wise.  The
discriminant is a symmetric real matrix by definition and thus has real
eigenvalues.

We use both the discriminant of the walk $\opP$ and its absorbing
walk~$\opP'$.  The discriminant $\opD(\opP)$ of~$\opP$ has real eigenvalues
$1= \lambda_1 > \lambda_2 \geq \cdots \geq \lambda_N > -1$ with corresponding
eigenvectors $\ket{\lambda_1}, \ket{\lambda_2}, \ldots, \ket{\lambda_N}$.  The
\emph{spectral gap} of $\opP$ is $\delta = 1-\lambda_2$.

The discriminant $\opD(\opP')$ of the absorbing walk $\opP'$ has $|\marked|$
eigenvectors $\ket{x}$ with eigenvalue~$+1$, one for each marked state $x \in
\marked$.  The remaining $N-|\marked|$ eigenvectors $\ket{\lambda'_1},
\ket{\lambda'_2}, \ldots, \ket{\lambda'_{N-|\marked|}}$ have eigenvalues $1 >
\lambda'_1 \geq \lambda'_2 \geq \cdots \geq \lambda'_{N-|\marked|} > -1$ that
are strictly less than one in absolute value and they span the unmarked
subspace.

The hitting time of $\opP$ can then be expressed in terms of the spectra of
the discriminant $\opD(\opP')$ for the absorbing walk.  Let $\ket{\sdis} =
\sum_{x \in X} \sqrt{\sdis_x} \ket{x}$ denote the column vector corresponding
to the stationary distribution, normalized entry-wise.  For any subset $S
\subseteq X$ of elements, let $\eps_S = \sum_{x \in S} \pi_x$ denote the
probability that the stationary distribution is in a state in~$S$.  Let
$\ket{S_\pi} = \frac{1}{\sqrt{\eps_S}}\sum_{x \in S} \sqrt{\pi_x}\ket{x}$
denote the normalized projection of the stationary distribution $\ket{\sdis}$
onto the subspace spanned by elements in a non-empty subset~$S$.  Let
$\ket{S_\pi}$ be the vector of length zero if subset $S$ is empty.  In
particular, we use $\eps_\marked$ and $\eps_\unmarked$, as well as
$\ket{\marked_\pi}$ and $\ket{\unmarked_\pi}$, to denote the quantities for
the marked and unmarked subsets, respectively.

\begin{lemma}[see e.g.~\cite{Sze04,NRS14,KMOR16}]
  \label{lemma:cht}
  The hitting time of a reversible ergodic Markov chain~$\opP$ with marked
  elements~$\marked$ is
  \begin{equation}\label{eq:cht}
    \cht(\opP, \marked) = \sum_{k = 1}^{N-|\marked|} 
    \frac{|\inner{\lambda_k'}{\unmarked_\pi}|^2}{1 - \lambda_k'}.
  \end{equation}
\end{lemma}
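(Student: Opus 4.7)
The plan is a two-stage reduction: first express the hitting time as a matrix bilinear form by first-step analysis, then symmetrize via the discriminant and insert its spectral decomposition on the unmarked subspace.

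For the first stage, let $\opP_\unmarked$ be the principal submatrix of $\opP$ indexed by the unmarked states. For $x \in \unmarked$, let $h_x$ be the expected number of steps until the walk hits $\marked$ starting from $x$. First-step analysis gives $h_x = 1 + \sum_{y \in \unmarked} \opP_{yx} h_y$, equivalently $(I - \opP_\unmarked^T) h = \mathbf{1}_\unmarked$. Since $\opP$ is ergodic and $\marked$ is non-empty, the spectral radius of $\opP_\unmarked$ is strictly less than one, so $h = \sum_{t \geq 0} (\opP_\unmarked^T)^t \mathbf{1}_\unmarked$ by Neumann expansion. Averaging $h$ over the stationary distribution restricted to $\unmarked$ then writes $\cht(\opP, \marked)$ as a bilinear form in $(I - \opP_\unmarked)^{-1}$.

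For the second stage, reversibility yields the similarity $\opD(\opP) = \op{\Pi}^{-1/2} \opP \op{\Pi}^{1/2}$, where $\op{\Pi}$ is the diagonal matrix of stationary probabilities. The restriction of $\opD(\opP')$ to the unmarked subspace equals the analogous restriction of $\opD(\opP)$, since the marked rows and columns of $\opP'$ decouple into identity blocks. Conjugating the bilinear form by $\op{\Pi}^{1/2}$ absorbs the factors of $\sqrt{\pi_x}$ and rewrites it as $\bra{\unmarked_\pi} (I - \opD(\opP')|_\unmarked)^{-1} \ket{\unmarked_\pi}$. Inserting the spectral decomposition $\opD(\opP')|_\unmarked = \sum_k \lambda_k' \ket{\lambda_k'}\bra{\lambda_k'}$, which exists by symmetry of the discriminant, expands this quadratic form into the claimed sum.

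The only non-routine bookkeeping is verifying that the $\op{\Pi}^{1/2}$ conjugation converts the sub-stationary and indicator vectors into exactly the normalized $\ket{\unmarked_\pi}$ matching the paper's convention for the starting distribution; the rest is the geometric-series identity $(I - A)^{-1} = \sum_{t \geq 0} A^t$ applied to the contraction $A = \opD(\opP')|_\unmarked$, together with the fact that its eigenvectors $\ket{\lambda_k'}$ form an orthonormal basis of the unmarked subspace, orthogonal to the marked $+1$-eigenspace of $\opD(\opP')$.
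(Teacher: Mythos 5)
The paper does not prove Lemma~\ref{lemma:cht} at all---it imports the identity from the cited literature---so there is no internal proof to compare against. Your derivation is the standard one and it is correct: first-step analysis gives $(I-\opP_\unmarked^{T})h=\mathbf{1}$, reversibility gives the similarity between $\opP$ and $\opD(\opP)$ via the diagonal matrix of stationary probabilities, the unmarked block of $\opD(\opP')$ coincides with that of $\opD(\opP)$, and the spectral decomposition of that symmetric block turns the quadratic form into the claimed sum. The one genuine subtlety is exactly the normalization you flag, and it does check out: the paper's hitting time starts from $\pi$ conditioned on being unmarked, contributing a factor $1/\eps_\unmarked$, while the conjugation produces the unnormalized vector $\sum_{x\in\unmarked}\sqrt{\pi_x}\ket{x}=\sqrt{\eps_\unmarked}\,\ket{\unmarked_\pi}$ on both sides of the bilinear form, contributing a factor $\eps_\unmarked$; these cancel, leaving precisely $\sum_{k}\vert\inner{\lambda'_k}{\unmarked_\pi}\vert^2/(1-\lambda'_k)$ with no stray factor. (Had the paper defined $\cht$ as starting from the unconditioned stationary distribution, the formula would pick up an extra $\eps_\unmarked$, so this bookkeeping is not optional.)
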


The hitting time is the expected number of steps of $\opP'$ required to reach
a marked vertex, starting from a random unmarked vertex, picked according to
the stationary distribution~$\sdis$.  We define the \emph{effective hitting
  time} $\effht(\opP,\marked)$ as the number of steps of $\opP'$ required to
reach a marked vertex with probability at least~$2/3$, again starting from a
random unmarked vertex, picked according to the stationary
distribution~$\sdis$.  By Markov's inequality, the effective hitting time is
at most three times larger than the hitting time.

In his seminal paper, Szegedy~\cite{Sze04} proved that we can \emph{detect}
whether a marked element exists or not quadratically faster by a quantum
algorithm.  If there are $|\marked|>0$ marked elements~$\marked$, it suffices
to run Szegedy's quantum algorithm for
$O\big(\sqrt{\effht(\opP,\marked)}\smallspace\big)$ steps to determine the
existence of a marked element with bounded one-sided
error~\cite{Sze04,MNRS12}.

A~breakthrough for quantum walks with multiple marked elements was achieved by
Krovi et al.~\cite{KOR10,KMOR10,KMOR16}.  They introduced a walk $\opP(s) =
(1-s) \opP + s \opP'$ which is an interpolation between the non-absorbing walk
$\opP$ and the absorbing walk $\opP'$.  The walk is parameterized by a
quantity $0 < s < 1$, which is chosen to be very close to~1, implying that the
walk is almost absorbing.  They prove that their algorithm both detects and
finds a marked element, even when there are multiple marked elements.  The
limitation is that their quantum walk does not necessarily guarantee a
quadratic speedup over a classical walk.  To measure the number of steps of
their algorithm, they introduce a quantity $\eht(\opP,\marked)$ called the
\emph{extended hitting time}.  Their algorithm takes a number of steps that is
of order $\sqrt{\smash{\eht}(\opP,\marked)}$, the square-root of the extended
hitting time, ignoring logarithmic factors.

Their work raises two main questions.  The first question is to determine the
extent to which the extended hitting time can exceed the hitting time.  The
second question is to continue the quest for the discovery of a quantum
algorithm that finds a marked element quadratically faster than a random walk
when there are multiple marked elements.

Ambainis and Kokainis~\cite{AK15} considered the question of determining the
largest possible ratio between the extended hitting time and the hitting time
for a natural search space.  They show that especially for the torus, the
ratio can be exceptionally large by providing an example of a set of marked
elements $\marked$ on the torus for which $\cht(\opP,\marked) \in O(1)$, yet
$\eht(\opP,\marked) \in \Theta(N)$.  That is, the hitting time is a constant,
yet the extended hitting time is linear in the size of the torus.  Searching
with multiple marked elements on the torus in the square-root of the extended
hitting time can be remarkably slow.

In the case there is a single marked element ($\marked = \{m\}$), the extended
hitting time is identical to the hitting time, and thus $\eht(\opP,\{m\}) =
\cht(\opP,\{m\})$.  For multiple marked elements, Ambainis and
Kokainis~\cite{AK15} proved a general upper bound on the extended hitting time
of $\eht(\opP,\marked) \leq \frac{1}{\eps} \frac{1}{\delta}$, which implies
that the extended hitting time can be at most a factor of $\frac{1}{\delta}$
larger than the hitting time $\cht(\opP,\marked)$.  For the torus, the
spectral gap $\delta$ is of order $\frac{1}{N}$, which is so small that it
permits the above ratio of order~$N$ of the extended hitting time over the
hitting time.

To derive an efficient quantum algorithm for the torus for multiple marked
elements, we first provide a new upper bound on the extended hitting time.  We
show that the extended hitting time on a marked set $\marked$ is never more
than the weighted average of the hitting times of any its constituents.

\begin{theorem}\label{theorem:eht_subadditivity}
  Let $\opP$ be a reversible ergodic random walk with stationary distribution
  $\pi$.  Let $\marked = \cup_{i} \marked_i$ be the disjoint union of
  non-empty subsets $\marked_i$ of marked elements.  The extended hitting time
  on~$\marked$ is at most the weighted average of the extended hitting times
  of its subsets,
  \begin{equation*}
    \eht(\opP, \marked) \in 
    O \left( \sum_i \frac{\eps_i}{\eps_\marked} 
      \smallspace \eht(\opP, \marked_i)\right).
  \end{equation*}
  Here $\eps_i = \sum_{x \in \marked_i} \pi_x$ is the probability that $\pi$
  is in marked subset~$\marked_i$, and $\eps_\marked = \sum_{i \in \marked}
  \eps_i$ is the total probability that $\pi$ is in a marked state.
\end{theorem}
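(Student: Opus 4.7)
The plan is to exploit a variational characterization of the extended hitting time. The quantity $\eht(\opP, \marked)$ admits a spectral formula of the form $\eht(\opP, \marked) = \sum_k |\inner{v_k'}{\unmarked_\pi}|^2/g(\lambda_k')$ summed over the non-trivial eigenpairs of the absorbing discriminant $\opD(\opP')$, or equivalently, the value of a Dirichlet-type quadratic form minimized over functions that vanish on $\marked$ and are normalized against the stationary measure. Using such a characterization, the bound will follow by constructing, for the full marked set $\marked$, a feasible witness built out of witnesses for each individual $\marked_i$.

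Concretely, I would proceed in three steps. First, fix for each $i$ an optimal witness $f_i$ achieving $\eht(\opP, \marked_i)$; by construction $f_i$ vanishes on $\marked_i$ but can be non-zero on $\marked \setminus \marked_i$. Second, form $\tilde f_i$ by zeroing out $f_i$ on $\marked \setminus \marked_i$, and then take a weighted combination $f = \sum_i c_i \tilde f_i$ with weights $c_i$ proportional to $\sqrt{\eps_i/\eps_\marked}$, so that $f$ vanishes on all of $\marked$ and is feasible for $\eht(\opP, \marked)$. Third, use the disjointness of the $\marked_i$ together with Cauchy-Schwarz to bound the Dirichlet energy of $f$ by a constant times $\sum_i c_i^2 E(f_i)$, and divide by the normalization factor to obtain the claimed bound. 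Along the way I would use the identity $\sum_i \eps_i = \eps_\marked$ to absorb normalization constants, so that the weights $\eps_i/\eps_\marked$ appear naturally.

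The main obstacle, I expect, is controlling the effect of the truncation step. Zeroing $f_i$ on $\marked \setminus \marked_i$ can introduce large differences across edges incident to $\marked \setminus \marked_i$, which could inflate the Dirichlet energy of $\tilde f_i$ compared to $f_i$. To bound this inflation, I would argue that $f_i$ is small in a $\sdis$-weighted sense on $\marked \setminus \marked_i$, exploiting the fact that $f_i$ is harmonic for the absorbing walk $\opP_i'$ on its unmarked complement. An alternative, potentially cleaner route is to work directly with the interpolated walk $\opP(s) = (1-s)\opP + s\opP'$ of Krovi et al.: since $\opP'$ agrees with each $\opP_i'$ outside $\marked$ and differs only by sending more mass to self-loops on $\marked \setminus \marked_i$, one can try to express spectral data for $\opP(s)$ as a weighted combination of spectral data for the analogous walks built from the $\opP_i'$ and then invoke a convexity argument. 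Either way, the technical heart of the proof will be to push through one of these two routes while keeping the constant hidden in the $O(\cdot)$ absolute and independent of the walk.
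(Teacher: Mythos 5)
Your plan has a genuine gap at its starting point. The single-sum spectral formula you posit, $\sum_k |\inner{\lambda_k'}{\unmarked_\pi}|^2/g(\lambda_k')$ over the eigenpairs of the absorbing discriminant $\opD(\opP')$, is the formula for the ordinary hitting time $\cht(\opP,\marked)$ (Lemma~\ref{lemma:cht}), not for $\eht(\opP,\marked)$. The extended hitting time is defined as a limit over the interpolated walks $\opP(s)$ as $s\to 1$ and is, up to constants, a \emph{ratio} of two such spectral sums; it can exceed $\cht$ by a factor of~$N$ on the torus, so no choice of $g$ rescues the single-sum form. The Dirichlet-form/witness-gluing machinery you build on top of it therefore lacks a sound foundation, and you yourself flag the remaining technical heart --- controlling the energy inflation caused by truncating $f_i$ on $\marked\setminus\marked_i$ --- as unresolved, offering only two speculative routes. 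A second structural problem is that the eigenbasis of $\opD(\opP')$ changes with the marked set, so the spectral data for $\marked$ and for each $\marked_i$ live in different bases; this is exactly the obstacle the paper identifies and then sidesteps.

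The paper's route is considerably simpler and avoids both issues. It invokes Theorem~\ref{theorem:eht} (from~\cite{DH16}), which rewrites $\eht(\opP,\marked)$ as $\Theta\bigl(\frac{1}{\eps_\marked}\qq(\opP,\marked_\pi)\bigr)$, where the escape time $\qq$ is a spectral sum over the \emph{original} walk $\opP$ --- a fixed eigenbasis independent of which elements are marked. Because the $\marked_i$ are disjoint, the normalized states satisfy the exact identity $\ket{\marked_\pi} = \sum_i \sqrt{\eps_i/\eps_\marked}\,\ket{(\marked_i)_\pi}$ with orthogonal summands, so no truncation or harmonicity argument is needed: Cauchy--Schwarz gives $\qq(\opP,\marked)\le\sum_i\qq(\opP,\marked_i)$ (Lemma~\ref{lemma:escape}), and multiplying and dividing each term by $\eps_i$ produces exactly the weights $\eps_i/\eps_\marked$ in the claimed average. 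If you want to salvage your plan, the essential missing idea is this change of representation from the absorbing walk's spectra to the original walk's spectra.
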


As a corollary, by letting each subset $\marked_i$ be a singleton set, we
obtain that the extended hitting time of a set is never more than the
worst-case hitting time of its members.

\begin{corollary}\label{corollary:ehtbound}
  The extended hitting time $\eht(\opP, \marked)$ on a marked subset $\marked$
  is in the order of the maximum of the hitting times $\cht(\opP,\{m\})$ of
  its members $m\in \marked$.
\end{corollary}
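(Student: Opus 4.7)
The plan is to apply Theorem~\ref{theorem:eht_subadditivity} with the finest possible decomposition of the marked set, namely the partition into singletons $\marked_m = \{m\}$, one for each $m \in \marked$. With this choice, the weight $\eps_i/\eps_\marked$ reduces to $\pi_m/\eps_\marked$, and the disjointness hypothesis of the theorem is automatically satisfied.

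Next, I would invoke the observation recorded earlier in Section~\ref{section:eht} that whenever there is a single marked element, the extended hitting time and the hitting time coincide, i.e.\ $\eht(\opP,\{m\}) = \cht(\opP,\{m\})$ for every $m\in\marked$. Substituting into the conclusion of Theorem~\ref{theorem:eht_subadditivity} then gives
\begin{equation*}
  \eht(\opP,\marked) \in O\!\left( \sum_{m \in \marked} \frac{\pi_m}{\eps_\marked}\, \cht(\opP,\{m\}) \right).
\end{equation*}

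Finally, since the coefficients $\pi_m/\eps_\marked$ are non-negative and sum to one, the right-hand side is a convex combination of the numbers $\cht(\opP,\{m\})$, and hence is bounded above by $\max_{m\in\marked}\cht(\opP,\{m\})$. This yields the stated bound.

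There is no genuine obstacle here: all the substantive work has already been done in Theorem~\ref{theorem:eht_subadditivity}, and the corollary is simply the specialization of that theorem to singleton blocks, combined with the elementary fact that a weighted average is dominated by its maximum. The only ingredients I need that are not a direct rewriting are the singleton identity $\eht(\opP,\{m\}) = \cht(\opP,\{m\})$, which the excerpt states explicitly.
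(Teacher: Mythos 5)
Your proposal is correct and matches the paper's own argument exactly: the paper derives the corollary by applying Theorem~\ref{theorem:eht_subadditivity} with singleton blocks, using the stated identity $\eht(\opP,\{m\}) = \cht(\opP,\{m\})$, and bounding the resulting convex combination by its maximum. Nothing further is needed.
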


There are two technical obstacles in analyzing and understanding the extended
hitting time.  Firstly, it is defined as a limit of the hitting time of the
interpolated walk $\opP(s)$ as the parameter~$s$ approaches~1.  Secondly, it
is expressed in terms of the spectra of the absorbing walk~$\opP'$, and that
spectra changes as we change the set of marked elements.  Fortunately, we can
circumvent both obstacles by applying the following theorem from~\cite{DH16}.

\begin{theorem}[\cite{DH16}]
\label{theorem:eht}
For any reversible ergodic random walk~$\opP$ with marked elements~$\marked$,
\begin{equation}\label{eq:eht}
  \eht(\opP, \marked) \in 
  \Theta\left(\frac{1}{\eps_\marked} \qq(\opP, \marked_\pi)\right).
\end{equation}
\end{theorem}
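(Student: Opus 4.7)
The plan is to exploit the characterization of the extended hitting time as a limit of hitting times of the interpolated walk $\opP(s) = (1-s)\opP + s\opP'$ as $s \to 1^{-}$. First, I would apply Lemma~\ref{lemma:cht} to the absorbing walk derived from $\opP(s)$, expressing $\cht(\opP(s), \marked)$ as an explicit spectral sum over the discriminant $\opD(\opP(s)')$. For every fixed $s < 1$ this walk is genuinely ergodic on the unmarked subspace, so the formula of Lemma~\ref{lemma:cht} is meaningful and well-conditioned, and the extended hitting time is recovered by sending $s \to 1^{-}$.

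The second step is to rewrite this $s$-dependent spectral sum in a form where the limit is transparent. The absorbing walk of $\opP(s)$ differs from $\opP$ by a modification supported on the marked set; this local structure lets me express $(I - \opD(\opP(s)'))^{-1}$, restricted to the unmarked subspace, through a resolvent identity of Sherman--Morrison--Woodbury type in terms of $(I - \opD(\opP))^{-1}$ plus a correction carried by the marked coordinates. Substituting this identity into the formula from Lemma~\ref{lemma:cht} and simplifying gives an expression whose $s$-dependence factors explicitly and whose remaining kernel is the quadratic form $\qq(\opP, \cdot)$ of the non-absorbing walk applied to a vector built from the projection of $\ket{\sdis}$ onto~$\marked$.

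Finally, I would let $s \to 1^{-}$ and read off the limit. After renormalization, the marked-side input to the quadratic form collapses onto $\ket{\marked_\pi}$, and the $s$-dependent scalar prefactor tends to $\frac{1}{\eps_\marked}$, since the starting state of the absorbing walk lives in the unmarked subspace with total weight $1 - \eps_\marked$, while the sole route to a marked vertex funnels probability through the interpolated edges whose weights are governed by $\eps_\marked$. Showing that the surviving expression is both upper and lower bounded by a constant multiple of $\frac{1}{\eps_\marked}\,\qq(\opP, \marked_\pi)$ yields the claimed $\Theta$-estimate.

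The main obstacle I expect is the asymptotic analysis near $s = 1$. The interpolated-then-absorbed walk has several eigenvalues of $\opD(\opP(s)')$ that drift toward~$1$ at rates controlled by $1-s$, and the corresponding overlaps with the starting distribution vanish at compensating rates; naively passing to the limit produces indeterminate $0/0$ contributions. The delicate point is to show that the resolvent identity rearranges these contributions so that all $(1-s)$-factors cancel between numerator and denominator, leaving a finite quantity that is exactly of order $\qq(\opP, \marked_\pi)/\eps_\marked$, with constants bounded away from $0$ and $\infty$ uniformly in $\opP$ and $\marked$.
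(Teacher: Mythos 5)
First, a point of comparison: the paper contains no proof of Theorem~\ref{theorem:eht} at all --- it is quoted from~\cite{DH16} precisely so that the extended hitting time can be used without touching the interpolation machinery --- so your proposal must stand on its own against the derivation it gestures at, the interpolated-walk framework of Krovi et al.~\cite{KMOR16}. Measured that way, it has a concrete fatal flaw in its very first step. Absorbing the interpolated walk destroys the interpolation: $\opP(s)=(1-s)\opP+s\opP'$ differs from $\opP$ only in the marked columns, and absorption overwrites exactly those columns, so $(\opP(s))'=\opP'$ for every~$s$. Consequently $\opD(\opP(s)')=\opD(\opP')$, the spectral sum you extract from Lemma~\ref{lemma:cht} is the same for every $s<1$, and its limit as $s\to1^-$ is the ordinary hitting time $\cht(\opP,\marked)$, not $\eht(\opP,\marked)$. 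Carried out faithfully, your plan would thus ``prove'' $\cht(\opP,\marked)\in\Theta\bigl(\frac{1}{\eps_\marked}\qq(\opP,\marked_\pi)\bigr)$, which is false: by the Ambainis--Kokainis example cited in Section~\ref{section:eht}~\cite{AK15}, the torus admits marked sets with $\cht(\opP,\marked)\in O(1)$ yet $\eht(\opP,\marked)\in\Theta(N)$. Relatedly, the difficulty you flag in your last paragraph --- eigenvalues of $\opD(\opP(s)')$ drifting toward $1$ at rates controlled by $1-s$ with compensating overlaps --- does not exist, since that spectrum is independent of~$s$; the drifting eigenvalues belong to $\opD(\opP(s))$, the discriminant of the non-absorbed interpolated walk.

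The missing idea is that the extended hitting time is defined through the \emph{non-absorbed} interpolated walk: $\eht(\opP,\marked)$ is the limit as $s\to1^-$ of the quantity obtained by evaluating the formula of Lemma~\ref{lemma:cht} on the non-principal eigenpairs of $\opD(\opP(s))$ itself, against the fixed vector $\ket{\unmarked_\pi}$. At $s=1$ this recovers the hitting time, but the quantity is discontinuous at $s=1$ when $|\marked|>1$; that discontinuity is the entire content of the extended hitting time, and any argument whose underlying operator does not actually vary with $s$ --- as yours, inadvertently, does not --- is structurally blind to it. (The paper's informal phrase ``limit of the hitting time of the interpolated walk'' invites your misreading, but the classical hitting time of $\opP(s)$ to $\marked$ from an unmarked start is $s$-independent, exactly because the walk agrees with $\opP$ on unmarked columns.) With the correct operator, your resolvent instinct is salvageable: $\opD(\opP(s))$ agrees with $\opD(\opP)$ on unmarked entries while its entries coupling marked and unmarked states carry a factor $\sqrt{1-s}$, and the perturbative computation of the limit yields a closed form. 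Writing $\ket{\unmarked_\pi}=\sqrt{1-\eps_\marked}\,\ket{\sdis}+\sqrt{\eps_\marked}\,\ket{\psi}$ with $\ket{\psi}$ a unit vector orthogonal to $\ket{\sdis}$, and noting $\vert\inner{\lambda_k}{\marked_\pi}\vert^2=(1-\eps_\marked)\,\vert\inner{\lambda_k}{\psi}\vert^2$ for $k\ge2$, the limit comes out of order $\frac{1}{\eps_\marked}\qq(\opP,\marked_\pi)$, which is the theorem. Your heuristic for the prefactor (probability funnelling through interpolated edges whose weights are governed by $\eps_\marked$) points the right way, but only after the object of study is corrected.
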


The theorem expresses the extended hitting time as a product of two factors.
The first factor is $\frac{1}{\eps_\marked}$, the inverse of the probability
that the stationary distribution is in a marked state.  The second factor is
defined below and is a quantity that is expressed in terms of the spectra
of~$\opP$, the original walk and not the absorbing walk~$\opP'$.
Theorem~\ref{theorem:eht} permits us to analyze the extended hitting time by
analyzing the original walk~$\opP$, a task that is often simpler than
analyzing the absorbing walk~$\opP'$.

\begin{definition}\label{def:escape}
  For any normalized vector $\ket{g}$ over the state space of the walk~$\opP$,
  the \emph{escape time} of~$\ket{g}$ is
  \begin{equation}\label{eq:escape}
    \qq(\opP,\ket{g}) = \sum_{k=2}^{N} 
    \frac{\vert \inner{\lambda_k}{g} \vert^2}{1-\lambda_k}.
  \end{equation}
  For any non-trivial subset $S \subseteq X$ of elements, define the escape time
  of $S$ with respect to $\pi$ as $\qq(\opP,S_\pi) = \qq(\opP,\ket{S_\pi})$.
\end{definition}

We will often omit the subscript $\pi$ and simply write $\qq(\opP,S)$ for
$\qq(\opP,S_\pi)$.  By definition, the escape time is a weighted average over
the reciprocals of all of the spectral gaps $1-\lambda_k$ of the original
walk~$\opP$.  It follows the escape time is at most the inverse of the
(smallest) gap $\delta = 1 - \lambda_2$.  Equation~\ref{eq:eht} then permits
us to re-derive that the extended hitting time is at most
$\frac{1}{\eps_\marked}\frac{1}{\delta}$, as shown by Ambainis and
Kokainis~\cite{AK15}.  The escape time is at least~$1/2$ for any normalized
vector $\ket{g}$ orthogonal to the principal eigenvector~$\ket{\lambda_1}$,
since the denominator in Eq.~\ref{eq:escape} is upper bounded by~2.  We next
show that the escape time is at most additive.
 
\begin{lemma}\label{lemma:escape}
  For any two disjoint subsets of elements $S_1$ and $S_2$,
  \begin{equation*}
    \qq(\opP, S_1 \cup S_2) \leq \qq(\opP,S_1) + \qq(\opP,S_2).
  \end{equation*}
\end{lemma}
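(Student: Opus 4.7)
The plan is to express $\ket{(S_1 \cup S_2)_\pi}$ as a superposition of $\ket{(S_1)_\pi}$ and $\ket{(S_2)_\pi}$ whose coefficient-squares sum to one, and then bound the overlap with each eigenvector $\ket{\lambda_k}$ by Cauchy--Schwarz before summing over the spectrum.

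First, I would use disjointness to decompose $\ket{(S_1 \cup S_2)_\pi}$. Since $S_1 \cap S_2 = \emptyset$, the stationary probabilities satisfy $\eps_{S_1 \cup S_2} = \eps_{S_1} + \eps_{S_2}$, and splitting the defining sum $\ket{(S_1\cup S_2)_\pi} = \frac{1}{\sqrt{\eps_{S_1\cup S_2}}} \sum_{x\in S_1\cup S_2} \sqrt{\pi_x}\ket{x}$ by the two pieces yields
\begin{equation*}
\ket{(S_1 \cup S_2)_\pi} = \alpha \ket{(S_1)_\pi} + \beta \ket{(S_2)_\pi},
\end{equation*}
where $\alpha = \sqrt{\eps_{S_1}/\eps_{S_1 \cup S_2}}$ and $\beta = \sqrt{\eps_{S_2}/\eps_{S_1 \cup S_2}}$. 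Crucially, $\alpha^2 + \beta^2 = 1$.

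Next, for each eigenvector $\ket{\lambda_k}$ of the discriminant $\opD(\opP)$, I would apply the Cauchy--Schwarz inequality in $\mathbb{R}^2$ to the vectors $(\alpha, \beta)$ and $(\inner{\lambda_k}{(S_1)_\pi}, \inner{\lambda_k}{(S_2)_\pi})$, obtaining
\begin{equation*}
|\inner{\lambda_k}{(S_1 \cup S_2)_\pi}|^2 \leq (\alpha^2 + \beta^2)\bigl(|\inner{\lambda_k}{(S_1)_\pi}|^2 + |\inner{\lambda_k}{(S_2)_\pi}|^2\bigr) = |\inner{\lambda_k}{(S_1)_\pi}|^2 + |\inner{\lambda_k}{(S_2)_\pi}|^2.
\end{equation*}
Since $1 - \lambda_k > 0$ for every $k \geq 2$, I would divide through by $1 - \lambda_k$ and sum from $k = 2$ to $N$ to conclude $\qq(\opP, S_1 \cup S_2) \leq \qq(\opP, S_1) + \qq(\opP, S_2)$.

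I do not foresee a significant obstacle; the argument is essentially two lines of linear algebra. The only subtle point is that Cauchy--Schwarz must be invoked with a pair $(\alpha, \beta)$ of unit norm, which is precisely what disjointness of $S_1$ and $S_2$ guarantees through $\eps_{S_1 \cup S_2} = \eps_{S_1} + \eps_{S_2}$; if the subsets overlapped, this normalization would fail and the termwise bound would pick up a factor strictly greater than one.
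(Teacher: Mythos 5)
Your proposal is correct and follows essentially the same route as the paper's own proof: decompose $\ket{(S_1\cup S_2)_\pi}$ as a unit-norm combination of the orthogonal states $\ket{(S_1)_\pi}$ and $\ket{(S_2)_\pi}$, bound each $|\inner{\lambda_k}{(S_1\cup S_2)_\pi}|^2$ via Cauchy--Schwarz, and sum over $k$. The only cosmetic difference is that you make the coefficients $\alpha,\beta$ explicit where the paper leaves them abstract, and the paper additionally notes the trivial case where one subset is empty.
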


\begin{proof}
  Fix a distribution over the vertex set~$V$, e.g.{} the stationary
  distribution~$\pi$.  The lemma holds trivially if $S_1$ or $S_2$ is the
  empty set, and thus assume that both sets are non-empty.  Write the
  normalized state $\ket{S_{1} \cup S_{2}} = a \ket{S_{1}} + b \ket{S_{2}}$ as
  a linear combination of the normalized and orthogonal elements $\ket{S_{1}}$
  and $\ket{S_{2}}$.  Noting that $a^2 + b^2 = 1$, by the Cauchy--Schwarz
  inequality, $\vert \inner{\lambda_k}{S_{1} \cup S_{2}}\vert^2$ is then at
  most the sum of $\vert \inner{\lambda_k}{S_{1}}\vert^2$ and $\vert
  \inner{\lambda_k}{S_{2}}\vert^2$.  We can thus upper bound the sum of the
  terms $\frac{\vert \inner{\lambda_k}{S_{1} \cup S_2}\vert^2}{1 - \lambda_k}$
  by one sum over terms of the form $\frac{\vert
    \inner{\lambda_k}{S_{1}}\vert^2}{1 - \lambda_k}$, plus another sum over
  terms of the form $\frac{\vert \inner{\lambda_k}{S_{2}}\vert^2}{1 -
    \lambda_k}$.  The former sum is the escape time of $S_1$, the latter the
  escape time of~$S_2$.
\end{proof}

We next use the sub-additivity of the escape time to prove that the extended
hitting time on a marked subset~$\marked$ is never more than the extended
hitting time of any of constituents.

\begin{proof}[Proof of Theorem~\ref{theorem:eht_subadditivity}]
  Let $\marked = \marked_1 \cup \marked_2$ be a disjoint union of two
  non-empty subsets of marked elements.  Let $\eps_1 = \eps_{\marked_1}$ be the
  probability that the stationary distribution $\sdis$ is in a marked state
  in~$\marked_1$, and let $\eps_2 = \eps_{\marked_2}$ be defined similarly.
  Let $\eps_\marked = \eps_1 + \eps_2$.

  Using Equation~\ref{eq:eht} and Lemma~\ref{lemma:escape}, and omitting
  asymptotic tight factors, write
  \begin{align*}
    \eht(\opP, \marked) 
    &= \frac{1}{\eps_\marked} \qq(\opP,\marked) %\\
    \leq \frac{1}{\eps_\marked} \big(\qq(\opP,\marked_1) + \qq(\opP,\marked_2)\big) \\
    &= \frac{\eps_1}{\eps_\marked} \frac{1} {\eps_1} \qq(\opP,\marked_1) + 
       \frac{\eps_2}{\eps_\marked} \frac{1} {\eps_2} \qq(\opP,\marked_2) %\\
    = \frac{\eps_1}{\eps_\marked} \eht(\opP, \marked_1) + 
       \frac{\eps_2}{\eps_\marked} \eht(\opP, \marked_2).
  \end{align*}
  Theorem~\ref{theorem:eht_subadditivity} follows by linearity.
\end{proof}

Corollary~\ref{corollary:ehtbound} has an important and previously
unrecognized consequence.  Consider we are given some computational problem
that has multiple solutions, and assume that we know how to solve the problem
when there is a unique solution.  Then we may be able to device a randomized
polynomial-time reduction that probabilistically makes all solutions but one
into non-solutions, and then find the only remaining solution.  Such pruning
ideas have been used in {e.g.} reducing SAT to unique-SAT~\cite{VV86} and
finding a marked element on the torus by Aaronson and Ambainis~\cite{AA05}.
As~stated in e.g.~\cite{MNRS12}, randomized reductions of multiple marked
elements to a unique marked element may increase the cost by a
poly-logarithmic factor.

\newcommand{\esteps}{\tilde{\eps}}

Theorem~\ref{theorem:eht_subadditivity} yields an alternative to such
reductions.  We simply just run either the controlled quantum walk
of~\cite{DH16} or the interpolated quantum walk of~\cite{KMOR16}.  Both
algorithms take a number of steps in the order of the square-root of the
extended hitting time, ignoring logarithmic factors.  By
Theorem~\ref{theorem:eht_subadditivity}, the extended hitting time of a subset
$\marked$ is upper bounded by the average of the hitting times of its members,
where the average is with respect to the stationary distribution~$\pi$.
Provided we are given an estimate $\esteps$ of $\epsilon_\marked$ satisfying
that $\frac{2}{3} \leq \frac{\esteps}{\epsilon_\marked} \leq \frac{4}{3}$,
then we find a marked element with probability at least~$1/5$.  (Apply
e.g.~Theorem~7 in~\cite{KMOR16} with the value $\eps_2 = \frac{1}{100}$ in
their proof.)

\begin{corollary}
  \label{corollary:findingmultiple}
  Given a reversible ergodic Markov chain $\op{P}$ with marked elements
  $\marked$, and an estimate $\esteps$ of $\epsilon_\marked$ satisfying that
  $\frac{2}{3} \leq \frac{\esteps}{\epsilon_\marked} \leq \frac{4}{3}$, we can
  find a marked state by a quantum walk with probability at least $1/5$ using
  in the order of $\sqrt{\cht(\opP, \{m\})}$ steps, where $m \in \marked$ is
  chosen to maximize the upper bound.
\end{corollary}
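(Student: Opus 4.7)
The plan is to compose two results already established in the excerpt: a known black-box quantum walk algorithm for finding a marked element, and the new structural bound on the extended hitting time given by Corollary~\ref{corollary:ehtbound}.

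First I would invoke the interpolated quantum walk of Krovi et al.~\cite{KMOR16} (equivalently, the controlled quantum walk of Dohotaru and H\o yer~\cite{DH16}). As noted in the paragraph just preceding the corollary, Theorem~7 of~\cite{KMOR16}, instantiated with parameter $\eps_2 = \tfrac{1}{100}$, takes the estimate $\esteps$ satisfying $\tfrac{2}{3} \leq \esteps/\eps_\marked \leq \tfrac{4}{3}$ and, with probability at least $1/5$, returns a marked element using a number of steps that is in the order of $\sqrt{\eht(\opP,\marked)}$ (ignoring logarithmic factors). This gives the algorithm; the only remaining task is to bound its runtime by the desired quantity.

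Second I would apply Corollary~\ref{corollary:ehtbound}, which states that $\eht(\opP,\marked)$ is in the order of $\max_{m \in \marked} \cht(\opP,\{m\})$. Substituting this bound into the runtime of the first step, the algorithm uses in the order of $\sqrt{\cht(\opP,\{m^*\})}$ steps, where $m^* \in \marked$ is the marked element that maximizes the single-element hitting time. This is exactly the claim of the corollary.

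There is essentially no technical obstacle here: the corollary is a direct composition of the new Corollary~\ref{corollary:ehtbound} (itself a one-line consequence of Theorem~\ref{theorem:eht_subadditivity} applied to singleton subsets) with an existing finding algorithm. The only items worth verifying are bookkeeping: that the success probability $1/5$ is indeed achievable with the cited parameter choice $\eps_2 = \tfrac{1}{100}$, and that the $\Theta(\cdot)$ constants hidden in Corollary~\ref{corollary:ehtbound} are absorbed into the big-$O$ in the step count. Neither requires any calculation beyond what is already in place.
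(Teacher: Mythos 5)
Your proposal matches the paper's own argument: the paper likewise justifies this corollary by running the interpolated walk of~\cite{KMOR16} (or the controlled walk of~\cite{DH16}) with the given estimate $\esteps$, citing Theorem~7 of~\cite{KMOR16} with $\eps_2 = \frac{1}{100}$ for the $1/5$ success probability, and then bounding the resulting $\sqrt{\smash{\eht}(\opP,\marked)}$ step count via Theorem~\ref{theorem:eht_subadditivity} applied to singletons, i.e.\ Corollary~\ref{corollary:ehtbound}. No gaps; this is essentially the same proof.
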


One advantage of applying an algorithm that runs in the square-root of the
extended hitting time, is that no direct pruning is necessary.  We do not need
to turn marked elements into non-marked elements.  It suffices to guess an
estimate $\esteps$ of the probability~$\eps_\marked$ of measuring a marked
state in the stationary distribution.  A~second advantage is that the extended
hitting time of a subset can be significantly less than the average of the
hitting times of its members.  The bounds in
Theorem~\ref{theorem:eht_subadditivity} and Lemma~\ref{lemma:escape} are only
upper bounds, not tight bounds.  The bounds can not be improved in general as
they are tight for some instances.  One such example is the case considered by
Nahimovs and Rivosh~\cite{NR16} of the torus with multiple marked elements
packed as densely as possible into a sub-square.

% ======================================================================
% Grid and Torus
% ======================================================================

\section{The torus graph}
\label{section:grid}

We consider walks on two-dimensional square torus graphs.  The graph contains
$N = n^2$ vertices organized into $n$ rows and $n$ columns.  There is one
vertex at location $(r,c)$ for each row $0 \leq r< n$ and column $0\leq c<n$.
The graph is directed and every vertex has in-degree~4 and out-degree~4.

We consider two types of boundary conditions.  We define the \emph{torus}
graph in the usual way, with vertices along the boundary connecting to
vertices on the opposite boundary.  A vertex at location $(r,c)$ is connected
to its four neighbors at locations $(r-1,c)$, $(r+1,c)$, $(r,c-1)$, and
$(r,c+1)$, where the addition is modulo~$n$.

We define the \emph{grid} graph to have self-loops on vertices on the
boundary.  A~vertex at location $(r,c)$ has four out-going edges pointing to
the locations $(\max\{r-1,0\},c)$, $(\min\{r+1,n-1\},c)$, $(r,\max\{c-1,0\})$,
and $(r,\min\{c+1,n-1\})$.  Every vertex has in-degree~4 and out-degree~4, as
for the torus.

Prior to this Section, all of our discussions have been for the torus, not the
grid.  Our algorithm given in Section~\ref{section:algorithm} uses both the
torus and grid graphs.  Since one cannot replace a walk on a torus that
crosses the boundary by a walk on a grid without potentially incurring a cost,
we need to clearly distinguish between the two graphs.  Our algorithm in
Section~\ref{section:algorithm} works for both the torus and grid.

We form a random walk $\opP_G$ on a graph~$G$ by taking the adjacency matrix
$\Adj(G)$ of~$G$ and normalize its \emph{columns}.  For every directed edge
$(u,v) \in G$, we set entry $(v,u)$ in $\opP_G$ to be the inverse of the
out-degree of vertex~$u$.  All other entries of~$\opP_G$ are zero.  Since both
the torus and grid are regular graphs $G$ with out-degree~4, their random walk
operators $\opP_G = \frac{1}{4} \Adj(G)$ are scaled versions of their
adjacency matrices.

Our proposed quantum algorithm for finding a marked state on the torus uses
both the torus and grid graphs.  Since the torus and grid are so
closely related, it seems intuitively obvious that walking on either graphs
should have little influence on the complexity of the algorithm.  Indeed, the
escape times on the torus and grid with $N = |V|$ vertices of an element $m
\in V$ are both of order $\log N$.

\begin{fact}The escape times of an element $m \in V$ on the torus and grid are
  both of order $\log N$,
\begin{equation*}
\qq(\Ptorus,\{m\}) \in \Theta(\log N)
\qquad \textup{ and } \qquad 
\qq(\Pgrid,\{m\}) \in \Theta(\log N).
\end{equation*}
\end{fact}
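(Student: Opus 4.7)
The plan is to compute the escape time directly from its spectral decomposition. Every vertex on the torus and on the grid has both in-degree and out-degree~$4$ (counting self-loops at the boundary of the grid), so the uniform distribution $\pi_x = 1/N$ is stationary on both graphs, and hence $\ket{\{m\}_\pi} = \ket{m}$ and $\eps_{\{m\}} = 1/N$. Both walk matrices are symmetric, so the discriminant coincides with the walk itself and
\begin{equation*}
\qq(\opP, \{m\}) \;=\; \sum_{k=2}^{N} \frac{|\inner{\lambda_k}{m}|^2}{1-\lambda_k}.
\end{equation*}

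For the torus, translation invariance diagonalizes $\Ptorus$ in the 2D Fourier basis $\ket{\phi_{p,q}}$, $0 \leq p,q < n$, with eigenvalues $\lambda_{p,q} = \frac{1}{2}(\cos(2\pi p/n) + \cos(2\pi q/n))$ and uniform overlaps $|\inner{\phi_{p,q}}{m}|^2 = 1/N$ for every $m$. Using $1 - \lambda_{p,q} = \sin^2(\pi p/n) + \sin^2(\pi q/n)$ and reindexing $p,q$ to the symmetric range $(-n/2, n/2]$, the bound $\sin(\pi x/n) \asymp |x|/n$ reduces the escape time to
\begin{equation*}
\qq(\Ptorus, \{m\}) \;\asymp\; \sum_{0 < p^2+q^2 \leq n^2/2} \frac{1}{p^2+q^2}.
\end{equation*}
An annular summation by radii (each radius-$r$ annulus contains $\Theta(r)$ lattice points) then yields $\Theta(\log n) = \Theta(\log N)$ for both the upper and the lower bound.

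For the grid, the calculation is analogous but with the discrete cosine basis replacing the Fourier basis, giving eigenvalues $\tfrac{1}{2}(\cos(\pi p/n)+\cos(\pi q/n))$, $0 \leq p,q<n$, and the same annular estimate applies provided the overlaps $|\inner{\phi_{p,q}}{m}|^2$ are $O(1/N)$ uniformly. The main obstacle is verifying this uniform bound: because of the boundary self-loops the overlaps are no longer exactly $1/N$, and their sizes depend on whether $m$ sits in the interior, on an edge, or at a corner. I would handle this by folding a $(2n)\times(2n)$ torus onto the $n \times n$ grid under a pair of axis reflections, which identifies grid eigenvectors with reflection-symmetric combinations of torus Fourier modes and reduces the grid escape time to a constant multiple of the torus one. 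As an independent cross-check, Theorem~\ref{theorem:eht} gives $\cht(\opP,\{m\}) = \eht(\opP,\{m\}) \asymp N \cdot \qq(\opP,\{m\})$, and the classical result that the hitting time to a single vertex on the 2D torus or grid is $\Theta(N\log N)$ confirms $\qq(\opP,\{m\}) = \Theta(\log N)$.
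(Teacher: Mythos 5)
Your proposal is correct and follows exactly the two derivations the paper itself sketches for this Fact: the direct spectral computation via Definition~\ref{def:escape} (which you carry out in full, including the correct eigenvalues, the uniform Fourier overlaps, and the annular summation giving $\Theta(\log N)$), and the reduction to the known $\Theta(N\log N)$ hitting time via Theorem~\ref{theorem:eht}, which you use as a cross-check. You actually supply more detail than the paper does, and your identified obstacle for the grid (non-uniform overlaps with the cosine eigenbasis) together with the reflection/unfolding fix is a sound way to close it.
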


The above fact can be derived from known facts that the hitting time of a
unique element on the torus and grid are of order $N \log N$ and then applying
Theorem~\ref{theorem:eht}.  The fact can also be shown directly by first
computing the spectra for the torus and grid, as done in e.g.~\cite{AKR05} for
the torus, and then applying Definition~\ref{def:escape}.

% ======================================================================
% Locality
% ======================================================================

\section{Locality of random walks on the torus}
\label{section:locality}

To obtain a faster quantum algorithm, we first need to resolve the main
obstacle that a random walk on a torus is localized.

\begin{lemma}\label{lemma:linelocal}
  The probability that a random walk of $T$ steps on an infinite line stays
  within distance $\bigceils{4\sqrt{T}}$ from the initial position is at
  least~$1-1/745$.
\end{lemma}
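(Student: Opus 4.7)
The plan is to write the walk as $S_t = \sum_{i=1}^t X_i$ for i.i.d.\ mean-zero increments $X_i$ supported in $[-1,1]$, and then combine an exponential pointwise tail bound with a maximal inequality. This setup works uniformly for both the standard simple symmetric walk ($X_i = \pm 1$ with probability $1/2$ each) and the lazy version that arises from projecting the two-dimensional torus walk onto one axis ($X_i \in \{-1, 0, +1\}$ with probabilities $1/4, 1/2, 1/4$). The guiding intuition is that $4\sqrt{T}$ is roughly four standard deviations of $S_T$, so Gaussian-like concentration should give a deviation probability of order $e^{-8}$.

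First, Hoeffding's lemma for zero-mean variables in $[-1,1]$ yields $E[e^{\lambda X_i}] \le e^{\lambda^2/2}$, so $E[e^{\lambda S_T}] \le e^{\lambda^2 T/2}$. Applying Doob's submartingale inequality to the non-negative submartingale $e^{\lambda S_t}$ (with $\lambda > 0$) gives
\begin{equation*}
 \Pr\bigl(\max_{0 \le t \le T} S_t \ge a\bigr) \le e^{-\lambda a}\, E[e^{\lambda S_T}] \le \exp\bigl(\lambda^2 T/2 - \lambda a\bigr),
\end{equation*}
which is minimized at $\lambda = a/T$ to produce $\exp(-a^2/(2T))$. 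Repeating the argument with $-S_t$ in place of $S_t$ and union-bounding,
\begin{equation*}
 \Pr\bigl(\max_{0 \le t \le T} |S_t| \ge a\bigr) \le 2\exp\bigl(-a^2/(2T)\bigr).
\end{equation*}
Substituting $a = \lceil 4\sqrt{T}\rceil \ge 4\sqrt{T}$ then gives an upper bound of $2 e^{-8}$, which is below $1/745$ because $e^8 \approx 2981 > 1490$, establishing the lemma.

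There is no serious obstacle. The only care needed is to verify the numerical inequality $2 e^{-8} \le 1/745$ (equivalently $e^8 \ge 1490$) and, if the authors have the lazy projection in mind, to observe that the number of non-zero steps $K \sim \mathrm{Binomial}(T, 1/2)$ is at most $T$, so the conditional maximal bound given $K$ still holds with $T$ in place of $K$. A fully elementary alternative that avoids Doob's inequality is the reflection principle, which replaces the factor $2$ by $4$ and yields exactly $4/e^8 \le 1/745$, matching the stated constant on the nose.
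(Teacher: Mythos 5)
Your proof is correct, and it reaches the same Gaussian-tail numerology as the paper ($\lceil 4\sqrt{T}\rceil$ is four standard deviations, giving a deviation probability of order $e^{-8}$), but the route from the endpoint bound to the running-maximum bound differs. The paper first applies the Azuma--Hoeffding inequality to bound the probability that the walk \emph{ends} at distance more than $k$ by $2\exp(-k^2/(2\ell))$, and then uses the reflection principle together with Bayes's rule --- once the walk reaches distance $k+1$ it ends beyond distance $k$ with conditional probability at least $1/2$ by symmetry --- to conclude that the probability of \emph{ever} reaching distance $k+1$ is at most $4\exp(-k^2/(2T)) \leq 4e^{-8} < 1/745$. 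You instead apply Doob's maximal inequality to the exponential submartingale $e^{\lambda S_t}$, which controls the running maximum directly and saves the reflection-principle factor of $2$, yielding the stronger bound $2e^{-8}$; you also correctly identify the reflection-principle variant as the one that lands exactly on the paper's constant $4e^{-8} \leq 1/745$. Your approach has the mild advantage of not relying on the symmetry of the walk (only on mean-zero bounded increments), and your remark about the lazy projection of the two-dimensional walk onto one axis is a point the paper glosses over when it passes from Lemma~\ref{lemma:linelocal} to Lemma~\ref{lemma:gridlocal}; the paper's argument is slightly more elementary in that it needs only the one-dimensional Azuma bound plus a symmetry observation rather than a martingale maximal inequality. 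Both verifications of the final numerical inequality are sound, since $e^{8} \approx 2981$ exceeds both $1490$ and $4 \cdot 745 = 2980$.
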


\begin{proof}
  Consider a walk on a doubly-infinite line. The walk starts in some fixed
  initial position, and we measure distances from this initial position.  The
  probability that the walk is at distance (strictly) more than $k$ from the
  initial position after $\ell$ steps is at most $2 \exp(- \frac{k^2}{2
    \ell})$ by the Azuma--Hoeffding inequality.

  The conditional probability that the walk ends at a distance larger than
  $k$, conditioned on that the walk ever reaches distance $k+1$, is at
  least~$1/2$ by the reflection principle: once the walk reaches distance
  $k+1$, the walk is equally likely to end on either side of that location.

  By Bayes's rule, the probability that the walk reaches distance $k+1$ is
  then at most $4 \exp(- \frac{k^2}{2 \ell})$ which is at most $4 e^{-8}$ when
  $k = \bigceils{4 \sqrt{T}}$.  Finally, $4e^{-8}$ is less than $1/745$.
\end{proof}

Since the grid is the cartesian graph product of two line graphs, we
immediately get that a walk on the grid is also locally contained.

\begin{lemma}\label{lemma:gridlocal}
  The probability that a random walk of $T$ steps on an infinite grid stays
  within distance $\bigceils{4\sqrt{T}}$ in all four directions from the
  initial position is at least~$1-2/745$.
\end{lemma}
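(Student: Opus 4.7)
The plan is to reduce the two-dimensional claim to Lemma~\ref{lemma:linelocal} by decomposing the grid walk into two independent one-dimensional walks, one on each coordinate axis, and then taking a union bound.

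First I would observe that at each of the $T$ steps of the walk on the infinite grid, the chosen neighbor is uniform over the four directions, so with probability $1/2$ the step is horizontal (with conditional distribution $\pm 1$, each with probability $1/2$) and with probability $1/2$ the step is vertical. Let $T_h$ denote the (random) number of horizontal steps and $T_v = T - T_h$ the number of vertical steps, so $T_h \sim \mathrm{Binomial}(T, 1/2)$. Conditioned on the choice of axis at every step, the column coordinate evolves as an independent symmetric $\pm 1$ random walk of length $T_h$ on the infinite line, and the row coordinate evolves as an independent symmetric $\pm 1$ random walk of length $T_v$; when no step is taken along a given axis, that coordinate is held fixed and its displacement can only be smaller than in a full-length walk.

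Next I would apply Lemma~\ref{lemma:linelocal} separately to each coordinate. Since $T_h, T_v \leq T$, we have $\bigceils{4\sqrt{T_h}} \leq \bigceils{4\sqrt{T}}$ and $\bigceils{4\sqrt{T_v}} \leq \bigceils{4\sqrt{T}}$, so the lemma's conclusion applied with radius $\bigceils{4\sqrt{T}}$ still gives failure probability at most $1/745$ conditionally on the values of $T_h$ and $T_v$. Averaging over $T_h$ (which is what unconditioning amounts to) preserves this bound, so each coordinate stays within $\bigceils{4\sqrt{T}}$ of its initial value throughout the $T$ steps with probability at least $1 - 1/745$. A union bound over the two axes gives failure probability at most $2/745$, and the event ``within $\bigceils{4\sqrt{T}}$ in all four directions from the initial position'' is precisely the intersection of the two one-dimensional containment events, which yields the claimed lower bound of $1 - 2/745$.

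I do not expect any real obstacle here: the argument is essentially just ``cartesian product plus union bound.'' The one subtlety worth spelling out is the monotonicity step, that Lemma~\ref{lemma:linelocal} applied with a smaller number of steps $T' \leq T$ and the larger radius $\bigceils{4\sqrt{T}}$ still gives the $1 - 1/745$ bound, which justifies conditioning on $T_h$ and averaging. Everything else is just the observation that projections of the grid walk onto the two coordinate axes are independent one-dimensional walks (with some lazy steps that only help containment).
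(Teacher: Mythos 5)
Your proof is correct and follows the same route as the paper, which simply observes that the grid is the cartesian product of two line graphs and declares the result immediate from Lemma~\ref{lemma:linelocal} with a union bound over the two axes. The details you supply---conditioning on the number of steps taken along each axis, the monotonicity observation that fewer steps with the same radius only improves the containment probability, and the union bound (which, as you note implicitly, does not even need the independence of the two coordinates)---are exactly the ones the paper elides.
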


Let us say that a walk is \emph{localized} if it stays within distance
$\bigceils{4\sqrt{T}}$ in all four directions from its initial position~$u$.

This locality property implies that we can substitute the global walk by
disjoint local walks.  Consider a torus of size $n \times n$, and fix an
integer $1\leq d < n$.  We cut the torus into $\Theta((\frac{n}{d})^2)$
disjoint graph components by removing edges from the torus.  We remove edges
that cross graph cuts so that each resulting component is a sub-grid (without
self-loops on the boundary vertices) and so that all components have length
and width that are at least $D$ and at most~$D+1$, for some $d \leq D < 2d$.
We next add self-loops to every vertex on the boundaries of the resulting
graph components.  The overall effect is that we have modified the $n \times
n$ torus into $\Theta((\frac{n}{d})^2)$ disjoint grid graphs, each of size
roughly $D \times D$, by turning edges between adjacent components into
self-loops.

Now, consider a random walk of $T$ steps on the torus of size $N=n^2$ starting
from the stationary distribution~$\pi$.  We set $d = 2 \bigceils{4\sqrt{T}}$
and modify the torus into $\Theta(\frac{N}{T})$ disjoint sub-grids as
described above.  We next sample one of these $\Theta(\frac{N}{T})$ sub-grids
according to the stationary distribution $\pi$ for the torus.  That is, we
sample the sub-grid~$G$ with probability $\eps_G = \sum_{v \in G} \pi_v$.  The
next lemma shows that the probability that a random sub-grid~$G$ contains at
least one marked vertex is high.

\begin{lemma}\label{lemma:subgridfraction}
  If a random walk of $T$ steps on the torus of size $N$ finds a marked vertex
  with probability at least~$p$, for $p \geq \frac{1}{74}$, then the
  probability $p_G$ that a random sub-grid, sampled from the
  $\Theta(\frac{N}{T})$ sub-grids as described above, contains at least one
  marked vertex is at least~$\frac{1}{5}p$.
\end{lemma}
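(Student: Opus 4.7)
The plan is to combine the locality of the random walk (Lemma~\ref{lemma:gridlocal}) with a short geometric counting argument. Since the torus is vertex-regular, $\pi$ is uniform, and sampling a sub-grid $G$ with probability $\eps_G$ is equivalent to drawing $X_0 \sim \pi$ and returning the sub-grid $G(X_0)$ containing $X_0$; hence $p_G = |\marked^*|/N$, where $\marked^* = \bigcup_{G \in \mathcal{G}^*} G$ and $\mathcal{G}^*$ is the set of sub-grids in the tiling that contain at least one marked vertex. The goal thus reduces to showing $|\marked^*|/N \geq p/5$.

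I would proceed in three short steps. First, let $E_A$ be the event that the walk visits a marked vertex in $T$ steps and $E_B$ the event that the walk stays within $L_\infty$ distance $d/2 = \lceil 4\sqrt{T}\rceil$ of $X_0$. By hypothesis $P(E_A) \geq p$, and by Lemma~\ref{lemma:gridlocal} (applied to the lift of the walk to an infinite grid, which is faithful as long as the walk does not wrap) $P(E_B) \geq 1 - 2/745$, so a union bound gives $P(E_A \cap E_B) \geq p - 2/745$. Second, on $E_A \cap E_B$ the walk visits some marked $m$ with $\|m - X_0\|_\infty \leq d/2$, so $X_0 \in C$, where $C = \bigcup_{m \in \marked} \{v : \|v - m\|_\infty \leq d/2\}$, yielding $\pi(C) = |C|/N \geq p - 2/745$. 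Third, compare $\marked^*$ to $C$ geometrically: each $m \in \marked$ sits inside $G(m) \in \mathcal{G}^*$, so the $(d/2)$-ball around $m$ is contained in the $(d/2)$-expansion $N(G(m), d/2)$ of $G(m)$; the sub-grids in $\mathcal{G}^*$ are disjoint and have side lengths at least $D \geq d$, so expanding each side by $d/2$ on each end at most doubles it, giving $|N(G, d/2)|/|G| \leq (1 + d/D)^2 \leq 4$. A union bound yields $|C| \leq 4 |\marked^*|$, hence $p_G \geq (p - 2/745)/4$.

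A numerical check then completes the argument: $(p - 2/745)/4 \geq p/5$ iff $p \geq 10/745 = 2/149$, and this follows from the assumption $p \geq 1/74$ since $1/74 > 2/149$. The crucial quantitative step is the third: the constant $4$ depends on the sub-grids being at least as wide as the localization radius $d$, and it is exactly the choice $d = 2\lceil 4\sqrt{T}\rceil$ together with $D \geq d$ that provides this. If sub-grids were much smaller than the localization radius, the $(d/2)$-expansion of each sub-grid would dwarf the sub-grid itself, and the comparison $\pi(\marked^*) \geq \pi(C)/4$ could fail, leaving no slack to absorb the $2/745$ loss from Lemma~\ref{lemma:gridlocal}.
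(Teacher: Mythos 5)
Your proof is correct and follows essentially the same route as the paper: a union bound combining the hypothesis with Lemma~\ref{lemma:gridlocal} to get mass at least $p - 2/745$ on localized successful walks, followed by the same geometric observation that a $(d/2)$-expansion of a sub-grid of side at least $D \geq d$ at most quadruples its area, and the identical numerical check $p \geq 2/149$. The only cosmetic difference is that you phrase the middle step via the set $C$ of starting vertices near a marked vertex, whereas the paper phrases it via the event that a localized walk visits a sub-grid containing a marked vertex; these are the same counting argument.
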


\newcommand{\walk}{\omega}

\begin{proof}
  We define the probabilities $p_{ml}$ and $p_{Gl}$ below and prove the
  following three inequalities,
  \begin{equation*}
    p - \frac{2}{745} \leq p_{ml} \leq p_{Gl} \leq 4 p_G.
  \end{equation*}
  By these three inequalities, when $p \geq \frac{1}{74}$ then $p_G \geq
  \frac{1}{5}p$, and the lemma follows.

  Sample a random walk $\walk$ of length~$T$ as follows: Pick a vertex $u$ on
  the torus according to the stationary distribution~$\pi$, and apply the
  stochastic matrix $\Ptorus$ a number of $T$ times, starting at~$u$.

  Let $p_{ml}$ be the probability that a sampled walk is localized and visits
  a marked vertex.  Let $p_{Gl}$ be the probability that a sampled walk is
  localized and visits a vertex that is in a sub-grid that contains a marked
  vertex.  Clearly, $p_{ml} \leq p_{Gl}$, proving the second inequality.

  The unconditional probability that the walk $\walk$ visits a marked vertex
  is at least~$p$.  The unconditional probability that $\walk$ is \emph{not}
  localized is at most $\frac{2}{745}$ by Lemma~\ref{lemma:gridlocal}.  The
  joint probability that $\walk$ visits a marked vertex and is localized, is
  then at least $p-\frac{2}{745}$, proving the first inequality.

  Fix any sub-grid~$G$ and consider the joint probability that the sampled
  walk $\walk$ is localized and visits~$G$.  For a localized walk $\walk$ to
  visit~$G$, its starting vertex must be within distance
  $\bigceils{4\sqrt{T}}$ of~$G$.  Since $G$ has width and length at least
  $2\bigceils{4\sqrt{T}}$, the number of such vertices is at most four times
  the number of vertices in~$G$.  The probability of sampling any of these as
  the starting vertex~$u$ from the stationary distribution~$\pi$, which is
  uniform, is then at most four times the probability of sampling the starting
  vertex~$u$ from~$G$ itself.  This proves the third inequality, and thus also
  the lemma.
\end{proof}

In the next Section, we use this locality property in the design of our
quantum algorithm and give an efficient algorithm for the torus with multiple
marked elements.

% ======================================================================
% Algorithm
% ======================================================================

\section{An efficient quantum algorithm for the torus}
\label{section:algorithm}

An implementation of a random walk is done as follows.  We first pick a
starting node $v\in V$ for the random walk.  This node is picked according to
the stationary distribution~$\pi$.  The cost of generating~$v$ is called the
\emph{setup cost} and is denoted by~$\opS$.  We next apply the absorbing walk
$\opP'$ for some number $T$ steps.  Each step consists of two parts: We first
check whether the node~$v$ we are currently located at is marked or not.  The
cost of checking whether a node is marked or not is called the \emph{checking
  cost} and is denoted by~$\opC$.  If $v$ is marked, we halt the algorithm and
output~$v$.  If $v$ is not marked, we next apply the operator~$\opP$ once.
The cost of applying $\opP$ is called the \emph{update cost} and is denoted
by~$\opU$.  After repeating these two parts for $T$ steps, we check whether
the final node~$v$ is marked or not.  If so, we output~$v$, and if not, we
output ``failed search.''  A random walk with $T$ steps has total cost $\opS +
(T+1)\opC + T\opU$, which we write as $\opS + T(\opU + \opC)$ by letting the
setup cost include the cost of the first checking cost.  The walk outputs a
marked element with constant probability when $T \in
\Omega(\effht(\Ptorus,\marked))$.

A~quantum walk is implemented similarly~\cite{Sze04,San08,NRS14}.  We first
create some initial state $\kinit$ from the state $\ket{\pi} = \sum_{v \in V}
\sqrt{\pi_v} \ket{v}$, where $\pi$ is the stationary distribution.  We next
apply some number $T_q$ steps of the quantum walk, where each step consists of
one application of each of two quantum operators, one denoted $\refl(\marked)$
and one denoted $\opW(\opP)$, corresponding to the checking and update
operators applied in a random walk.  The algorithm stops after $T_q$ steps in
some final state.  The costs of these three operators are also denoted $\opS$,
$\opC$, and $\opU$, and are in general comparable in cost to the corresponding
operators for the random walk.  The quantum algorithm has total cost $\opS +
T_q(\opU + \opC)$.

A~measurement of the final state of the quantum walk will not necessarily
produce a marked state with constant probability.  Had that been the case,
then we would have had a quantum algorithm that \emph{finds} a marked element
in cost $\opS + T_q(\opU + \opC)$.  Instead, the quantum walk evolves the
initial state $\kinit$ away from the initial state, and this evolution away
from the initial state is sufficient to \emph{detect} that a marked state
exists.  Szegedy~\cite{Sze04} show that after $T_q$ steps, for some $T_q
\in\Theta\big(\sqrt{\effht(\opP,\marked)}\big)$, the final state has overlap
bounded away from~1 with the initial state.  A change by a constant in overlap
can be detected by standard techniques such as the swap test~\cite{BCWW01}.
If the swap test shows the final state is different from the initial state, we
deduce there is a marked state.  We learn that there exists a marked element,
but we do \emph{not} necessarily find one.  It is possible to efficiently
estimate the speed of the change in overlap by applying eigenvalue
estimation~\cite{Kit95} similar to its uses in e.g.{} quantum
counting~\cite{BHT98}, phase estimation~\cite{CEMM98}, and quantum
walks~\cite{MNRS11,MNRS12,KMOR16}.

\begin{theorem}
  \label{theorem:estimateH}
  There exists a quantum algorithm that given a reversible ergodic random walk
  $\opP$ with marked elements~$\marked \subset X$, with probability at least
  $2/3$ performs as follows: (1) it outputs an estimate $\estH \in
  O(\effht(\opP, \marked))$ satisfying that if we apply $\opP'$ for $\estH$
  steps starting from the initial distribution~$\pi$, we find a marked state
  with probability at least~$3/4$ and (2) it has cost in the order of $\opS +
  \sqrt{\estH}(\opU + \opC)$.
\end{theorem}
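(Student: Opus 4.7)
My plan is to combine the quantum walk's ability to \emph{detect} the presence of a marked element in $O(\sqrt{\effht(\opP,\marked)})$ quantum steps with a geometrically increasing guess for the hitting time, in the style of amplitude/eigenvalue estimation with exponentially refined precision. Concretely, I would iterate over guesses $T_i = 2^i$ for $i = 0, 1, 2, \ldots$ For each guess I run a subroutine that decides, with high confidence, whether $\effht(\opP,\marked) \lesssim T_i$. When the subroutine first confirms this, I set $\estH = c \cdot T_i$ for a sufficiently large absolute constant $c$ and halt.

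The inner subroutine for a guess $T$ is eigenvalue (phase) estimation on the Szegedy walk operator $\opW(\opP')$ for the absorbing walk, applied to $\kinit$ prepared from $\ket{\sdis}$, with precision $\eps \asymp 1/\sqrt{T}$. The cost is $\opS + O(\sqrt{T})$ applications of $\opW$ and $\refl(\marked)$, i.e., $\opS + O(\sqrt{T}(\opU+\opC))$. The spectral analysis of Szegedy and of Magniez et al.\ shows that $\ket{\sdis}$ decomposes, in the eigenbasis of $\opW(\opP')$, with amplitudes weighted so that $\sum_k |c_k|^2/\theta_k^2 \asymp \effht(\opP,\marked)$, where $\theta_k$ are the phases. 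Hence when $T \gtrsim \effht$, a constant fraction of the probability mass lives on eigenphases $|\theta_k| \gtrsim 1/\sqrt{T}$, which phase estimation at precision $\eps$ can separate from the stationary $+1$-eigenspace; conversely, when $T \ll \effht$, almost all amplitude remains within $\eps$ of phase zero, and the subroutine reports ``not yet.''

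With $\estH = c T_{i^*}$ at the first triggering index $i^*$, property (1) of the theorem splits into two pieces. That $\estH \in O(\effht)$ is immediate, since the doubling overshoots by at most a factor of two and the subroutine triggers at $T_{i^*} = \Theta(\effht)$. To see that applying $\opP'$ for $\estH$ steps from $\sdis$ finds a marked state with probability at least $3/4$, I tune the constant $c$ using that $\effht(\opP,\marked)$ is already a ``$2/3$-success'' count by definition, then invoke the standard amplification for classical hitting: iterating $O(1)$ independent passes of length $\effht$ boosts the success probability above $3/4$. The total cost telescopes to $\opS + \sqrt{\estH}(\opU + \opC)$ because $\sum_{i \leq i^*} \sqrt{T_i} = O(\sqrt{T_{i^*}})$, giving property (2).

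The main technical obstacle is making the inner subroutine faithful on both sides with a controlled global error budget: I need few false negatives (otherwise we overshoot $i^*$ and inflate $\estH$) and few false positives (otherwise we undershoot $i^*$ and the classical walk of length $\estH$ fails). The standard fix is to boost each iteration to success probability $1 - O(1/\log^2 \effht)$ by majority over $O(\log\log \effht)$ independent repetitions, which absorbs into the $O(\cdot)$ in the cost statement and leaves a total failure probability below $1/3$ over all $O(\log \effht)$ iterations. The remaining delicate spectral bookkeeping — separating $\kinit$ into the $+1$-eigenspace of $\opW(\opP')$ (supported on $\marked$) and its orthogonal complement, and showing that the orthogonal mass is quantitatively separated in phase from zero on the scale $1/\sqrt{\effht}$ — is a direct application of the analyses in \cite{Sze04,MNRS12,KMOR16}.
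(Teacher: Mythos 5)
The paper does not actually prove Theorem~\ref{theorem:estimateH}: it is asserted with a one-sentence pointer to eigenvalue estimation as used in quantum counting and in the quantum-walk literature \cite{Kit95,BHT98,MNRS11,MNRS12,KMOR16}. Your construction --- phase estimation on $\opW(\opP')$ applied to the state prepared from $\ket{\sdis}$, with geometrically doubling guesses $T_i$ at precision $1/\sqrt{T_i}$, plus majority boosting over the $O(\log\effht)$ rounds --- is exactly the construction those citations describe, so you are on the intended route, and your bookkeeping for the ``no false negatives'' direction ($1-\lambda_k' \asymp \theta_k^2$, and the Markov-type argument that at most a $1/c$-fraction of the unmarked mass sits at phases below $1/\sqrt{c\smallspace\cht}$) is the right machinery.

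There is, however, a genuine gap in how you close property~(1). First, the converse claim ``when $T \ll \effht$, almost all amplitude remains within $\eps$ of phase zero'' is too strong: the sum $\sum_k |c_k|^2/\theta_k^2$ is $\Theta(\cht(\opP,\marked))$, the \emph{hitting} time, not the effective hitting time, and the definition of $\effht$ only guarantees that a constant fraction (roughly $1/3$) of the unmarked mass sits at phases $O(1/\sqrt{T})$ when $T < \effht$; your trigger threshold must therefore be calibrated against the $1/4$ residual mass allowed by property~(1), not against ``a constant fraction.'' More seriously, your amplification step proves a different statement than the one required. ``Iterating $O(1)$ independent passes of length $\effht$'' restarts from $\pi$ after each pass, whereas property~(1) concerns a \emph{single} run of $\opP'$ for $\estH$ steps from $\pi$, and a single run of length $c\cdot\effht$ need not lift the success probability from $2/3$ to $3/4$: if a $0.26$-fraction of the unmarked stationary mass lies on an eigenvector of $\opD(\opP')$ with eigenvalue $1-1/N$ while the remaining mass is absorbed almost immediately, then $\effht \in O(1)$ yet the survival probability stays above $1/4$ for $\Omega(N)$ further steps. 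The repair is to have phase estimation certify directly that at most, say, a $1/5$-fraction of the mass lies at phases below $c/\sqrt{T_{i^*}}$ before triggering, which immediately gives the single-run $3/4$ guarantee; but you must then argue separately that this stronger trigger still fires by $T = O(\effht)$, i.e., relate the $2/3$- and $3/4$-thresholds of the absorbing walk --- a point your sketch (and, to be fair, the theorem statement itself) leaves unaddressed.
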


Theorem~\ref{theorem:estimateH} states that there is a quantum algorithm that,
with probability at least $2/3$, computes an accurate estimate of the
effective hitting time efficiently.  With complementary probability at most
$1/3$, this does not happen.  We can prevent that the algorithm in
Theorem~\ref{theorem:estimateH} never terminates or terminates after a
significant cost.  Let $H_{\textup{unique}} = \effht(\Ptorus, \{m\}) \in
\Theta(N \log N)$ be the effective hitting time for the torus when there is a
unique marked element.  If the algorithm in Theorem~\ref{theorem:estimateH}
has not halted after $\sqrt{H_{\textup{unique}}}$ steps, we halt the algorithm
and output $H_{\textup{unique}}$ as our estimate~$\estH$.

With this, we can give our quantum algorithm for finding a marked element on
the torus.  Our algorithm works for multiple marked elements, finds a marked
element with probability at least $1/\log N$, and has cost in the order of
\begin{equation*}
  \opS + \min\big\{ \sqrt{H\log H},\sqrt{N \log N} \big\}  (\opU + \opC),
\end{equation*}
where $H = \effht(\Ptorus, \marked)$ is the \emph{effective} hitting time on
marked subset~$\marked$.  This is within a poly-logarithmic factor of being a
quadratic speed-up over the cost of a random walk, which has cost the
effective hitting time.

\begin{theorem}[Main]\label{theorem:main}
  There is a quantum algorithm that, given a torus $\Ptorus$ with marked
  vertices $\marked \subset V$, with probability at least $2/3$, outputs a
  marked element $m \in \marked$ with success probability at least
  $\frac{1}{\log N}$ in cost in the order of $\opS + \min\big\{\sqrt{H \log
    H}, \sqrt{N \log N} \big\} (\opU + \opC)$, where $H = \effht(\Ptorus,
  \marked)$ is the effective hitting time.
\end{theorem}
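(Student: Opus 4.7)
The plan is to combine the effective hitting time estimator of Theorem~\ref{theorem:estimateH} with the locality property of Lemma~\ref{lemma:subgridfraction} in order to reduce multi-marked search on the torus to a randomly chosen sub-grid, where Corollary~\ref{corollary:findingmultiple} together with Corollary~\ref{corollary:ehtbound} yields a nearly quadratic speed-up regardless of how many marked elements the sub-grid contains.

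The first phase is estimation. I would run the quantum algorithm of Theorem~\ref{theorem:estimateH} to obtain, with probability at least $2/3$, an estimate $\estH \in O(\effht(\Ptorus,\marked))$ at cost $O(\opS + \sqrt{\estH}(\opU + \opC))$. As noted right before the theorem, we cap this procedure at $\sqrt{H_\textup{unique}} \in \Theta(\sqrt{N \log N})$ steps so that $\estH$ is always well defined and bounded by $\Theta(N \log N)$.

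The second phase is reduction to a local walk. Using the estimate $\estH$, I would partition the torus into $\Theta(N/\estH)$ disjoint sub-grids of side length between $\bigceils{4\sqrt{\estH}\smallspace}$ and twice that, as constructed in Section~\ref{section:locality}. Because $\pi$ is uniform, sampling a sub-grid $G$ with probability $\eps_G = |G|/N$ is implemented by sampling a single uniformly random vertex of the torus and returning its containing block. Since $\estH$ is within a constant factor of the effective hitting time, a random walk of $\Theta(\estH)$ steps from $\pi$ finds a marked vertex with probability at least $p \geq \tfrac{3}{4} > \tfrac{1}{74}$, so Lemma~\ref{lemma:subgridfraction} guarantees that the sampled $G$ contains a marked vertex with probability $\Omega(1)$.

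The third phase is searching inside the sampled sub-grid. Conditioned on $\marked\cap G$ being non-empty, the task becomes finding a marked vertex in a grid with $|G| = \Theta(\estH)$ vertices and self-loops on the boundary. By the Fact, the hitting time on $G$ of any single vertex is $O(\estH \log \estH)$, and by Corollary~\ref{corollary:ehtbound} the same bound holds for $\eht(\opP_G, \marked\cap G)$ irrespective of how many marked vertices $G$ carries. To invoke Corollary~\ref{corollary:findingmultiple}, I would guess the marked fraction $\eps_{\marked\cap G}$ by picking $k$ uniformly from $\{0,1,\ldots,\lceil \log_2 |G|\rceil\}$ and setting $\esteps = 2^{-k}$; with probability $\Omega(1/\log N)$ this lies within a factor of two of the true value, at which point Corollary~\ref{corollary:findingmultiple} produces a marked element with probability at least $1/5$ at cost $O(\sqrt{\estH\log \estH}(\opU + \opC))$. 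Multiplying the success probabilities of the three phases yields the claimed $\Omega(1/\log N)$. The second term of the $\min$ comes from an alternative branch that ignores the sub-grid partition and instead applies Corollary~\ref{corollary:findingmultiple} with the guessing scheme directly on the whole torus, using the unique-element hitting time $\Theta(N\log N)$ from the Fact; the algorithm returns whichever branch has smaller estimated cost.

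The main obstacle is showing that after the global-to-local reduction, the sampled sub-grid still captures a marked vertex with constant probability; this is precisely the content of Lemma~\ref{lemma:subgridfraction} and it is the reason the partition width must be $\bigceils{4\sqrt{\estH}\smallspace}$, matching the typical excursion of the walk over $\estH$ steps. A secondary subtlety is that the sub-grid is a grid with reflecting boundary rather than a torus, so the relevant unique-element hitting time comes from the grid half of the Fact and is $\Theta(|G|\log |G|)$ rather than $\Theta(|G|)$, which is exactly what contributes the extra $\sqrt{\log H}$ factor above $\sqrt{H}$.
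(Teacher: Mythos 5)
Your proposal is correct and follows essentially the same route as the paper: estimate the effective hitting time via Theorem~\ref{theorem:estimateH} (capped at $\sqrt{H_{\textup{unique}}}$), partition the torus into sub-grids of side $\Theta\big(\sqrt{\estH}\,\big)$, invoke Lemma~\ref{lemma:subgridfraction} to show a $\pi$-random sub-grid contains a marked vertex with constant probability, and then run the controlled walk with a randomly guessed power-of-two estimate $\esteps$ so that Corollary~\ref{corollary:findingmultiple} and Corollary~\ref{corollary:ehtbound} give success $\Omega(1/\log N)$ at cost $O(\sqrt{\estH\log\estH}\,(\opU+\opC))$. The only cosmetic deviations are that you sample the sub-grid classically rather than carrying it in superposition (which the paper explicitly notes is an acceptable variant) and that you obtain the $\sqrt{N\log N}$ term of the $\min$ via an explicit fallback branch rather than by capping $d$ at $n$.
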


The input to the algorithm in Theorem~\ref{theorem:main} is a torus $\Ptorus$
of size $n \times n$ with some subset $\marked \subset V$ of vertices being
marked.  The algorithm is as follows.

\begin{enumerate}
\item \label{alg:step:estimateH} Compute an estimate $\estH \in
  O(\effht(\Ptorus, \marked))$ using Theorem~\ref{theorem:estimateH}.  If the
  algorithm in Theorem~\ref{theorem:estimateH} has not halted after
  $\sqrt{H_{\textup{unique}}}$ steps, where $H_{\textup{unique}} =
  \effht(\Ptorus, \{m\})$, halt it and use $H_{\textup{unique}}$ as our
  estimate~$\estH$.
\item Set $d = 2 \bigceils{4 \sqrt{\estH}}$.  If $d> n$, then set $d=n$.
\item Divide the $n \times n$ torus into disjoint sub-grids so that each
  sub-grid has length and width between $D$ and $D+1$, for some $d \leq D <
  2d$.
\item Create the initial state $\ket{\pi} = \sum_{v \in \Ptorus} \sqrt{\pi_v}
  \ket{v}$ over all vertices in the torus corresponding to the stationary
  distribution~$\pi$ for the torus.
\item For each vertex $v \in \Ptorus$, assign the name of the sub-grid
  that~$v$ belongs to in an ancilla register, $\sum_{v \in \Ptorus}
  \sqrt{\pi_v} \ket{v} \ket{\textup{subgrid}(v)}$, in superposition.
\item Set $\esteps = \frac{1}{2^k}$, where integer $k$ is picked uniformly
  at random satisfying $1 \leq 2^k < N$.
\item \label{alg:step:subgridwalk} Run a controlled quantum walk on each
  sub-grid for $\Theta(D \sqrt{\log D})$ steps with estimate $\esteps$ in
  superposition over all sub-grids, by conditioning the walk on the name of
  the sub-grid in the ancilla register.
\item Measure the final state, producing a vertex $v$ of the torus.  Check if
  $v$ is marked.  If so, output~$v$.  If not, output ``unsuccessful search.''
\end{enumerate}

We first prove the correctness of the algorithm.  Assume that the first step
of the algorithm outputs a suitable estimate for the effective hitting time as
given in Theorem~\ref{theorem:estimateH}.  This event happens with probability
at least~$\frac{2}{3}$.  Then the probability that a random sub-grid contains
at least one marked vertex is at least~$\frac{1}{5} \times \frac{3}{4} =
\frac{3}{20}$, by Lemma~\ref{lemma:subgridfraction}.  For each of those
sub-grids, by Corollary~\ref{corollary:findingmultiple}, the controlled
quantum walk in step seven finds a marked element with probability at
least~$\frac{1}{5}$, for at least one of the $\log N$ possible values for~$k$.
Note that in step seven, each of the conditional walks on the sub-grids start
the walk on the state corresponding to the stationary distribution for that
sub-grid.  The entire algorithm thus outputs a marked element with probability
at least $\frac{2}{3} \times \frac{3}{20} \times \frac{1}{5} = \frac{1}{50}$
for at least one of the $\log N$ possible values for~$k$.

The cost of the algorithm is easily deduced.  With probability at least~$2/3$,
two properties hold: (1) the estimate $\estH$ computed in the first step is in
the order of the effective hitting time $\effht(\Ptorus,\marked)$, and (2) the
first and seventh steps of the algorithm each uses a number of steps that is in
the order of $\opS + \sqrt{\estH \log \estH} (\opU + \opC)$.  Further, in all
events, each of the seven steps of the algorithm never uses more than in the
order of $\sqrt{N \log N} (\opU + \opC)$ steps of a quantum walk.
Theorem~\ref{theorem:main} follows.

We remark that we can test all $\log N$ possible values for $k$ by testing
each of them in turn.  This will increase the overall cost by a factor of
$\log N$ and lead to an algorithm with constant success probability.
By~applying amplitude amplification~\cite{BHMT02}, the increase in cost can be
improved to being a factor of order~$\sqrt{\log N}$.  By testing each value of
$k$ in increasing order in turn, our algorithm will have the same cost as the
best known quantum algorithms when there is a unique marked element.

We remark that in step seven, we need to run a quantum walk that finds a
marked element in the sub-grid, even if the sub-grid contains multiple marked
elements.  The controlled quantum walk of~\cite{DH16} and the interpolated
walk of~\cite{KMOR16} both do so in $O(D\sqrt{\log D})$ steps, when provided
an estimate of the probability~$\eps_\marked$.  We also remark that one may
omit the conditioning of the quantum walk in step seven by measuring the
ancilla register containing the name of a sub-grid immediately after step
five.  Conducting measurements as early as possible in a quantum algorithm is
frequently used when no further computations are required.  An~early example
of such is the semi-classical quantum Fourier transform by Griffiths and
Niu~\cite{GN96}.

% ======================================================================
% Concluding remarks
% ======================================================================

\section{Concluding remarks}
\label{section:concludingremarks}

We have given an efficient quantum algorithm for a finding a marked element on
the torus with probability at least $1/\log N$ when there are multiple marked
elements.  Our algorithm has cost in the order of $\opS + \sqrt{\effH \log
  \effH} (\opU + \opC)$, where $\effH = \effht(\Ptorus,\marked)$ is the number
of steps used by the random walk $\Ptorus$ to find any one of the marked
elements $\marked$.  This is a quadratic speed-up, up to a poly-logarithmic
factor.  It is the first known quantum walk that has cost less than the
square-root of the extended hitting time.  It is, for the torus, an
affirmative answer to the main open question in quantum walks whether it is
possible to find a marked element efficiently when there are multiple marked
elements.

The study of the torus has proven influential for at least two reasons.
Firstly, much progress in quantum walks has been initiated by work on the
cycle and the torus, and then later generalized to arbitrary graphs.
Secondly, the torus is a hard test-case because the ratio between its hitting
time and the reciprocal of its $\eps\delta$ bound~\cite{Sze04} is large.  For
a unique marked element, the hitting time is of order $N \log N$, and the
reciprocal of the $\eps\delta$ bound is of order~$N^2$, and thus almost
quadratically bigger.

In~this work, we have proposed to use localization to our advantage in quantum
search, and not as an obstacle to be overcome.  We show that localization
makes quantum search efficient when there are multiple marked elements on the
torus.

\subparagraph*{Acknowledgements.}  This work was partially supported by NSERC,
the Natural Sciences and Engineering Research Council of Canada and CIFAR, the
Canadian Institute for Advanced Research.

% ======================================================================
% Bibliography
% ======================================================================

%\providecommand{\url}[1]{\texttt{#1}}
%\providecommand{\urlprefix}{URL }

\newcommand{\SortLex}[1]{}


\begin{thebibliography}{10}

\bibitem{AA05}
Scott Aaronson and Andris Ambainis.
\newblock Quantum search of spatial regions.
\newblock {\em Theory of Computing}, 1(4):47--79, 2005.
\newblock \href {http://arxiv.org/abs/quant-ph/0303041}
  {\path{arXiv:quant-ph/0303041}}, \href
  {http://dx.doi.org/10.4086/toc.2005.v001a004}
  {\path{doi:10.4086/toc.2005.v001a004}}.

\bibitem{Amb03}
Andris Ambainis.
\newblock Quantum walks and their algorithmic applications.
\newblock {\em International Journal of Quantum Information}, 1(4):507--518,
  2003.
\newblock \href {http://arxiv.org/abs/quant-ph/0403120}
  {\path{arXiv:quant-ph/0403120}}, \href
  {http://dx.doi.org/10.1142/S0219749903000383}
  {\path{doi:10.1142/S0219749903000383}}.

\bibitem{Amb07}
Andris Ambainis.
\newblock Quantum walk algorithm for element distinctness.
\newblock {\em SIAM Journal on Computing}, 37(1):210--239, 2007.
\newblock \href {http://arxiv.org/abs/quant-ph/0311001}
  {\path{arXiv:quant-ph/0311001}}, \href
  {http://dx.doi.org/10.1137/S0097539705447311}
  {\path{doi:10.1137/S0097539705447311}}.

\bibitem{ABNOR12}
Andris Ambainis, Art{\={u}}rs Ba{\v{c}}kurs, Nikolajs Nahimovs, Raitis Ozols,
  and Alexander Rivosh.
\newblock Search by quantum walks on two-dimensional grid without amplitude
  amplification.
\newblock In {\em Conference on the Theory of Quantum Computation,
  Communication and Cryptography}, TQC'12, pages 87--97, 2012.
\newblock \href {http://arxiv.org/abs/1112.3337} {\path{arXiv:1112.3337}},
  \href {http://dx.doi.org/10.1007/978-3-642-35656-8_7}
  {\path{doi:10.1007/978-3-642-35656-8_7}}.

\bibitem{AKR05}
Andris Ambainis, Julia Kempe, and Alexander Rivosh.
\newblock Coins make quantum walks faster.
\newblock In {\em 16th Annual ACM-SIAM Symposium on Discrete Algorithms},
  SODA'05, pages 1099--1108, 2005.
\newblock \href {http://arxiv.org/abs/quant-ph/0402107}
  {\path{arXiv:quant-ph/0402107}}.

\bibitem{AK15}
Andris Ambainis and M{\=a}rti{\c n}{\v s} Kokainis.
\newblock Analysis of the extended hitting time and its properties, 2015.
\newblock Poster presented at the 18th Annual Conference on Quantum Information
  Processing, QIP'15, Sydney, Australia.

\bibitem{APN15}
Andris Ambainis, Renato Portugal, and Nikolajs Nahimovs.
\newblock Spatial search on grids with minimum memory.
\newblock {\em Quantum Information \& Computation}, 15(13-14):1233--1247, 2015.
\newblock \href {http://arxiv.org/abs/1312.0172} {\path{arXiv:1312.0172}}.

\bibitem{BCJKM13}
Aleksandrs Belovs, Andrew~M. Childs, Stacey Jeffery, Robin Kothari, and
  Fr{\'e}d{\'e}ric Magniez.
\newblock Time-efficient quantum walks for 3-distinctness.
\newblock In {\em 40th International Colloquium on Automata, Languages, and
  Programming}, ICALP'13, pages 105--122, 2013.
\newblock \href {http://arxiv.org/abs/1302.7316} {\path{arXiv:1302.7316}},
  \href {http://dx.doi.org/10.1007/978-3-642-39206-1_10}
  {\path{doi:10.1007/978-3-642-39206-1_10}}.

\bibitem{BBBV97}
Charles~H. Bennett, Ethan Bernstein, Gilles Brassard, and Umesh Vazirani.
\newblock Strengths and weaknesses of quantum computing.
\newblock {\em SIAM Journal on Computing}, 26(5):1510--1523, 1997.
\newblock \href {http://arxiv.org/abs/quant-ph/9701001}
  {\path{arXiv:quant-ph/9701001}}, \href
  {http://dx.doi.org/10.1137/S0097539796300933}
  {\path{doi:10.1137/S0097539796300933}}.

\bibitem{BHMT02}
Gilles Brassard, Peter H{\o}yer, Michele Mosca, and Alain Tapp.
\newblock Quantum amplitude amplification and estimation.
\newblock In Samuel~J. Lomonaco, Jr. and Howard~E. Brandt, editors, {\em
  Quantum Computation and Quantum Information}, volume 305 of {\em AMS
  Contemporary Mathematics}, pages 53--74. American Mathematical Society, 2002.
\newblock \href {http://arxiv.org/abs/quant-ph/0005055}
  {\path{arXiv:quant-ph/0005055}}.

\bibitem{BHT98}
Gilles Brassard, Peter H{\o}yer, and Alain Tapp.
\newblock Quantum counting.
\newblock In {\em 25th International Colloquium on Automata, Languages, and
  Programming}, ICALP'98, pages 820--831, 1998.
\newblock \href {http://arxiv.org/abs/quant-ph/9805082}
  {\path{arXiv:quant-ph/9805082}}, \href {http://dx.doi.org/10.1007/BFb0055105}
  {\path{doi:10.1007/BFb0055105}}.

\bibitem{BCWW01}
Harry Buhrman, Richard Cleve, John Watrous, and Ronald {\SortLex{Wolf}}de~Wolf.
\newblock Quantum fingerprinting.
\newblock {\em Physical Review Letters}, 87(16):167902, 2001.
\newblock \href {http://arxiv.org/abs/quant-ph/0102001}
  {\path{arXiv:quant-ph/0102001}}, \href
  {http://dx.doi.org/10.1103/PhysRevLett.87.167902}
  {\path{doi:10.1103/PhysRevLett.87.167902}}.

\bibitem{CG14}
Andrew~M. Childs and Yimin Ge.
\newblock Spatial search by continuous-time quantum walks on crystal lattices.
\newblock {\em Physical Review A: General Physics}, 89:052337, 2014.
\newblock \href {http://arxiv.org/abs/1403.2676} {\path{arXiv:1403.2676}},
  \href {http://dx.doi.org/10.1103/PhysRevA.89.052337}
  {\path{doi:10.1103/PhysRevA.89.052337}}.

\bibitem{CEMM98}
Richard Cleve, Artur Ekert, Chiara Macchiavello, and Michele Mosca.
\newblock Quantum algorithms revisited.
\newblock {\em Royal Society of London A: Mathematical, Physical and
  Engineering Sciences}, 454(1969):339--354, 1998.
\newblock \href {http://arxiv.org/abs/quant-ph/9708016}
  {\path{arXiv:quant-ph/9708016}}, \href
  {http://dx.doi.org/10.1098/rspa.1998.0164}
  {\path{doi:10.1098/rspa.1998.0164}}.

\bibitem{DH16}
C\u{a}t\u{a}lin Dohotaru and Peter H{\o}yer.
\newblock Controlled quantum amplification, 2016.
\newblock Manuscript. To be presented at the 20th Annual Conference on Quantum
  Information Processing, QIP'17, Seattle, USA.

\bibitem{Gall14}
Fran\c{c}ois {\SortLex{Gall}}Le~Gall.
\newblock Improved quantum algorithm for triangle finding via combinatorial
  arguments.
\newblock In {\em 55th IEEE Symposium on Foundations of Computer Science},
  FOCS'14, pages 216--225, 2014.
\newblock \href {http://arxiv.org/abs/1407.0085} {\path{arXiv:1407.0085}},
  \href {http://dx.doi.org/10.1109/FOCS.2014.31}
  {\path{doi:10.1109/FOCS.2014.31}}.

\bibitem{GN96}
Robert~B. Griffiths and Chi-Sheng Niu.
\newblock Semiclassical {Fourier} transform for quantum computation.
\newblock {\em Physical Review Letters}, 76:3228--3231, 1996.
\newblock \href {http://arxiv.org/abs/quant-ph/9511007}
  {\path{arXiv:quant-ph/9511007}}, \href
  {http://dx.doi.org/10.1103/PhysRevLett.76.3228}
  {\path{doi:10.1103/PhysRevLett.76.3228}}.

\bibitem{Gro96}
Lov~K. Grover.
\newblock A fast quantum mechanical algorithm for database search.
\newblock In {\em 28th Annual ACM Symposium on Theory of Computing}, STOC'96,
  pages 212--219, 1996.
\newblock \href {http://arxiv.org/abs/quant-ph/9605043}
  {\path{arXiv:quant-ph/9605043}}, \href
  {http://dx.doi.org/10.1145/237814.237866} {\path{doi:10.1145/237814.237866}}.

\bibitem{Kem03}
Julia Kempe.
\newblock Quantum random walks: {An} introductory overview.
\newblock {\em Contemporary Physics}, 44(4):307--327, 2003.
\newblock \href {http://arxiv.org/abs/quant-ph/0303081}
  {\path{arXiv:quant-ph/0303081}}, \href
  {http://dx.doi.org/10.1080/00107151031000110776}
  {\path{doi:10.1080/00107151031000110776}}.

\bibitem{Kit95}
Alexei~Yu. Kitaev.
\newblock Quantum measurements and the abelian stabilizer problem, 1995.
\newblock \href {http://arxiv.org/abs/quant-ph/9511026}
  {\path{arXiv:quant-ph/9511026}}.

\bibitem{KMOR10}
Hari Krovi, Fr{\'e}d{\'e}ric Magniez, M{\=a}ris Ozols, and J{\'e}r{\'e}mie
  Roland.
\newblock Finding is as easy as detecting for quantum walks.
\newblock In {\em 37th International Colloquium on Automata, Languages, and
  Programming}, ICALP'10, pages 540--551, 2010.
\newblock \href {http://arxiv.org/abs/1002.2419v1} {\path{arXiv:1002.2419v1}},
  \href {http://dx.doi.org/10.1007/978-3-642-14165-2_46}
  {\path{doi:10.1007/978-3-642-14165-2_46}}.

\bibitem{KMOR16}
Hari Krovi, Fr{\'e}d{\'e}ric Magniez, M{\=a}ris Ozols, and J{\'e}r{\'e}mie
  Roland.
\newblock Quantum walks can find a marked element on any graph.
\newblock {\em Algorithmica}, 74(2):851--907, 2016.
\newblock \href {http://arxiv.org/abs/1002.2419} {\path{arXiv:1002.2419}},
  \href {http://dx.doi.org/10.1007/s00453-015-9979-8}
  {\path{doi:10.1007/s00453-015-9979-8}}.

\bibitem{KOR10}
Hari Krovi, M{\=a}ris Ozols, and J{\'e}r{\'e}mie Roland.
\newblock Adiabatic condition and the quantum hitting time of {M}arkov chains.
\newblock {\em Physical Review A: General Physics}, 82(2):022333, 2010.
\newblock \href {http://arxiv.org/abs/1004.2721} {\path{arXiv:1004.2721}},
  \href {http://dx.doi.org/10.1103/PhysRevA.82.022333}
  {\path{doi:10.1103/PhysRevA.82.022333}}.

\bibitem{MN07}
Fr{\'e}d{\'e}ric Magniez and Ashwin Nayak.
\newblock Quantum complexity of testing group commutativity.
\newblock {\em Algorithmica}, 48(3):221--232, 2007.
\newblock \href {http://arxiv.org/abs/quant-ph/0506265}
  {\path{arXiv:quant-ph/0506265}}, \href
  {http://dx.doi.org/10.1007/s00453-007-0057-8}
  {\path{doi:10.1007/s00453-007-0057-8}}.

\bibitem{MNRS12}
Fr{\'e}d{\'e}ric Magniez, Ashwin Nayak, Peter~C. Richter, and Miklos Santha.
\newblock On the hitting times of quantum versus random walks.
\newblock {\em Algorithmica}, 63(1):91--116, 2012.
\newblock \href {http://arxiv.org/abs/0808.0084} {\path{arXiv:0808.0084}},
  \href {http://dx.doi.org/10.1007/s00453-011-9521-6}
  {\path{doi:10.1007/s00453-011-9521-6}}.

\bibitem{MNRS11}
Fr{\'e}d{\'e}ric Magniez, Ashwin Nayak, J{\'e}r{\'e}mie Roland, and Miklos
  Santha.
\newblock Search via quantum walk.
\newblock {\em SIAM Journal on Computing}, 40(1):142--164, 2011.
\newblock \href {http://arxiv.org/abs/quant-ph/0608026}
  {\path{arXiv:quant-ph/0608026}}, \href {http://dx.doi.org/10.1137/090745854}
  {\path{doi:10.1137/090745854}}.

\bibitem{MSS07}
Fr{\'e}d{\'e}ric Magniez, Miklos Santha, and Mario Szegedy.
\newblock Quantum algorithms for the triangle problem.
\newblock {\em SIAM Journal on Computing}, 37(2):413--424, 2007.
\newblock \href {http://arxiv.org/abs/quant-ph/0310134}
  {\path{arXiv:quant-ph/0310134}}, \href {http://dx.doi.org/10.1137/050643684}
  {\path{doi:10.1137/050643684}}.

\bibitem{MW15}
David~A. Meyer and Thomas~G. Wong.
\newblock Connectivity is a poor indicator of fast quantum search.
\newblock {\em Physical Review Letters}, 114:110503, 2015.
\newblock \href {http://arxiv.org/abs/1409.5876} {\path{arXiv:1409.5876}},
  \href {http://dx.doi.org/10.1103/PhysRevLett.114.110503}
  {\path{doi:10.1103/PhysRevLett.114.110503}}.

\bibitem{NR16}
Nikolajs Nahimovs and Alexander Rivosh.
\newblock Quantum walks on two-dimensional grids with multiple marked
  locations.
\newblock In {\em 42nd International Conference on Current Trends in Theory and
  Practice of Computer Science}, pages 381--391, 2016.
\newblock \href {http://arxiv.org/abs/1507.03788} {\path{arXiv:1507.03788}},
  \href {http://dx.doi.org/10.1007/978-3-662-49192-8_31}
  {\path{doi:10.1007/978-3-662-49192-8_31}}.

\bibitem{NS16}
Nikolajs Nahimovs and Raqueline A.~M. Santos.
\newblock Adjacent vertices can be hard to find by quantum walks, May 2016.
\newblock \href {http://arxiv.org/abs/1605.05598} {\path{arXiv:1605.05598}}.

\bibitem{NRS14}
Ashwin Nayak, Peter~C. Richter, and Mario Szegedy.
\newblock Quantum analogues of {M}arkov chains.
\newblock In {\em Encyclopedia of Algorithms}. Springer Berlin Heidelberg,
  2014.
\newblock \href {http://dx.doi.org/10.1007/978-3-642-27848-8_302-2}
  {\path{doi:10.1007/978-3-642-27848-8_302-2}}.

\bibitem{San08}
Miklos Santha.
\newblock Quantum walk based search algorithms.
\newblock In {\em International Conference on Theory and Applications of Models
  of Computation}, TAMC'08, pages 31--46, 2008.
\newblock \href {http://arxiv.org/abs/0808.0059} {\path{arXiv:0808.0059}},
  \href {http://dx.doi.org/10.1007/978-3-540-79228-4_3}
  {\path{doi:10.1007/978-3-540-79228-4_3}}.

\bibitem{Sze04}
Mario Szegedy.
\newblock Quantum speed-up of {Markov} chain based algorithms.
\newblock In {\em 45th IEEE Symposium on Foundations of Computer Science},
  FOCS'04, pages 32--41, 2004.
\newblock \href {http://dx.doi.org/10.1109/FOCS.2004.53}
  {\path{doi:10.1109/FOCS.2004.53}}.

\bibitem{Tul08}
Avatar Tulsi.
\newblock Faster quantum walk algorithm for the two dimensional spatial search.
\newblock {\em Physical Review A: General Physics}, 78(4):012310, 2008.
\newblock \href {http://arxiv.org/abs/0801.0497} {\path{arXiv:0801.0497}},
  \href {http://dx.doi.org/10.1103/PhysRevA.78.012310}
  {\path{doi:10.1103/PhysRevA.78.012310}}.

\bibitem{VV86}
Leslie~G. Valiant and Vijay~V. Vazirani.
\newblock {NP} is as easy as detecting unique solutions.
\newblock {\em Theoretical Computer Science}, 47:85--93, 1986.
\newblock \href {http://dx.doi.org/10.1016/0304-3975(86)90135-0}
  {\path{doi:10.1016/0304-3975(86)90135-0}}.

\bibitem{Ven12}
Salvador~E. {Venegas-Andraca}.
\newblock Quantum walks: {A}~comprehensive review.
\newblock {\em Quantum Information Processing}, 11(5):1015--1106, 2012.
\newblock \href {http://arxiv.org/abs/1201.4780} {\path{arXiv:1201.4780}},
  \href {http://dx.doi.org/10.1007/s11128-012-0432-5}
  {\path{doi:10.1007/s11128-012-0432-5}}.

\end{thebibliography}
\end{document}